\documentclass[letterpaper, 10 pt, conference]{ieeeconf}
\IEEEoverridecommandlockouts
\usepackage{cite}
\usepackage[compact]{titlesec}
\usepackage{amsmath,amssymb,amsfonts}
\usepackage{hyperref}
\usepackage{graphicx}
\usepackage{textcomp}
\usepackage{times}
\usepackage{graphicx}
\usepackage{mathrsfs}
\usepackage{amstext} 
\usepackage{algorithm}
\usepackage{algorithmic}
\usepackage{setspace} 
\usepackage{amssymb} 
\usepackage{mathtools} 
\usepackage{verbatim}
\usepackage{color}
\usepackage{booktabs} 
\usepackage[font=small]{caption}
\newcommand{\RightComment}[1]{\hfill\(\triangleright\) #1}
\renewcommand{\theenumi}{\alph{enumi}}

\newcommand{\real}{\mathbb{R}}

\newcommand{\Ec}{\mathcal{E}}
\newcommand{\Fc}{\mathcal{F}}

\newcommand{\Hc}{\mathcal{H}}

\newcommand{\Kc}{\mathcal{K}}

\newcommand{\Pc}{\mathcal{P}}

\newcommand{\Rc}{\mathcal{R}}
\newcommand{\Sc}{\mathcal{S}}

\newcommand{\Uc}{\mathcal{U}}
\newcommand{\Vc}{\mathcal{V}}

\newcommand{\rank}[1]{\operatorname{rank}(#1)}

\newcommand{\longthmtitle}[1]{\mbox{}{\textit{(#1):}}}

\newcommand{\oprocendsymbol}{\hbox{$\square$}}
\newcommand{\oprocend}{\relax\ifmmode\else\unskip\hfill\fi\oprocendsymbol}

\newtheorem{theorem}{Theorem}[section]
\newtheorem{definition}[theorem]{Definition}

\newtheorem{lemma}[theorem]{Lemma}

\newtheorem{remark}[theorem]{Remark}

\newtheorem{proposition}[theorem]{Proposition}  
\newtheorem{problem}[theorem]{Problem}

\renewcommand{\baselinestretch}{0.97}
\graphicspath{{fig/}}

\usepackage{multirow}  
\usepackage{subcaption}  

\allowdisplaybreaks

\def\BibTeX{{\rm B\kern-.05em{\sc i\kern-.025em b}\kern-.08em
    T\kern-.1667em\lower.7ex\hbox{E}\kern-.125emX}}
\begin{document}
\title{\LARGE A Unified Algebraic Framework for Subspace Pruning in Koopman Operator Approximation via Principal Vectors
\thanks{This work was supported by AFOSR Award FA9550-23-1-0740.}}

\author{Dhruv Shah \quad Jorge Cort\'es \thanks{D. Shah and J. Cort\'es
    are with Department of Mechanical and Aerospace Engineering, UC
    San Diego, USA, {\tt\small \{dhshah,cortes\}@ucsd.edu}}}

\maketitle
\thispagestyle{empty}

\begin{abstract}%
Finite-dimensional approximations of the Koopman operator rely critically on identifying nearly invariant subspaces. This invariance proximity can be rigorously quantified via the principal angles between a candidate subspace and its image under the operator. To systematically minimize this error, we propose an algebraic framework for subspace pruning utilizing principal vectors. We establish the equivalence of this approach to existing consistency-based methods while providing a foundation for broader generalizations. To ensure scalability, we introduce an efficient numerical update scheme based on rank-one modifications, reducing the computational complexity of tracking principal angles by an order of magnitude. Finally, we demonstrate the effectiveness of our framework through numerical simulations.
\end{abstract}


\section{Introduction}

The Koopman operator maps nonlinear state-space evolution to a linear operator acting on observable functions \cite{BOK-JVN:32}, enabling the application of linear spectral techniques to complex nonlinear systems \cite{IM:20, MB-RM-IM:12}. However, standard data-driven approximations that minimize only one-step residual errors often fail to yield robust long-term predictions. Because the true utility of the Koopman framework lies in its global \textit{linear structure}, mere one-step accuracy is insufficient if the chosen subspace lacks invariance under the operator \cite{MH-JC:23-csl}. To ensure long-term fidelity, recent algorithms focus on extracting nearly invariant subspaces from data with tunable precision \cite{MH-JC:21-acc, MH-JC:25-access}. We prioritize identifying these invariant subspaces because preserving the underlying linear representation unlocks powerful linear systems tools that are otherwise unavailable for nonlinear models. Building on this foundation, this paper introduces a novel geometric framework for subspace pruning based on principal angles and vectors, unifying existing approaches and providing a pathway for future generalizations.

\emph{Literature Review:} The Koopman operator has been a subject of
intense research across various disciplines. Within systems and
control, researchers have successfully extended the originally
autonomous framework to accommodate control systems
\cite{MK-IM-eigenfuction:18, MH-JC:26-auto}. This shift has paved the
way for applying traditional methodologies—such as optimal control
\cite{MEV-CNJ-BH:21} and robust closed-loop strategies
\cite{RS-MS-KW-JB-FA:26}—to nonlinear dynamics. Furthermore, recent
efforts have tailored controller synthesis for non-affine systems
using input-state separable Koopman models \cite{DS-JC:25-csl}. A
comprehensive review of these control-oriented advancements can be
found in \cite{RS-KW-IM-JB-MS-FA:26}.

Because the Koopman operator is inherently infinite-dimensional, its practical implementation necessitates finite-dimensional approximations. Furthermore, since exact linear embeddings are rarely obtainable for complex dynamics \cite{ZL-NO-EDS:25}, standard practice relies on projecting the operator onto a specifically chosen subspace. To this end, Extended Dynamic Mode Decomposition (EDMD) \cite{MOW-IGK-CWR:15} has emerged as the predominant projection-based method. Its widespread adoption is underpinned by a robust theoretical foundation that offers rigorous guarantees, including asymptotic convergence \cite{MK-IM:18}, finite-data probabilistic error bounds \cite{FN-SP-FP-MS-KW:23}, and $L^{\infty}$ error constraints for kernel-based variants \cite{FK-FMP-MS-AS-KW:25}.

However, projection-based approximations introduce spurious
eigenfunctions that compromise spectral analysis
\cite{SK-FN-SP-JHN-CC-CS:20} and truncation errors that degrade
predictive accuracy \cite{MH-JC:23-auto}. To deal with these issues,
Residual Dynamic Mode Decomposition (ResDMD)
\cite{MJC-LJA-MS:23,MJC-AT:24,MJK:24} employs residuals to manage
spectral pollution and estimate pseudospectra.  Alternatively, recent
research has exploited the operator's algebraic structure to formalize
model selection and isolate maximal invariant
subspaces~\cite{MH-JC:23-auto}. These geometric principles led to the
formalization of \textit{consistency index} —a metric used to bound
general approximation errors \cite{MH-JC:23-csl}— which forms the
basis for the Recursive Forward-Backward EDMD (RFB-EDMD) algorithm for
iteratively pruning candidate dictionaries \cite{MH-JC:25-access}. The
present work advances this algebraic methodology by re-framing the
subspace pruning problem through the lens of principal angles and
vectors. By adopting this geometric perspective, we introduce novel
algorithms that leverage rank-one updates to achieve improved
computational efficiency. Recently, and independently from the present
manuscript, \cite{GC-NB-JCL-SB-MJC:26} builds on the computational
tools introduced for ResDMD to introduce the data-driven Principal
Angle Decomposition (PAD) algorithm, which also relies on dictionary
refinement via principal angles to enhance the trustworthiness of
Koopman approximations.

\emph{Statement of Contributions:} Our contributions are threefold.
First, we detail the explicit formulation for computing principal
angles and vectors in the empirical $L_2$ setting, which successfully
translates our general geometric framework into a practical,
data-driven context.  Second, we propose the Single-Principal-Vector
(SPV) pruning strategy, which iteratively removes the principal vector
corresponding to the largest principal angle. Following this, we
establish a geometric unification of pruning algorithms by
demonstrating an equivalence between the consistency-based pruning of
RFB-EDMD and our proposed geometric SPV pruning strategy.  Finally, to
ensure the scalability of our approach, we introduce an efficient
computational scheme that leverages symmetric rank-one updates and
incremental QR decompositions. This numerical implementation
significantly enhances performance, reducing the computational
complexity of recomputing principal angles after each pruning step by
an order of magnitude.

\section{Preliminaries}

In this section\footnote{We use the following notation. Let $\real^n$
  denote the $n$-dimensional Euclidean space and
  $\mathcal{X} \subseteq \real^n$ the state space. For a matrix
  $M \in \real^{m \times n}$, we denote its range (column space) by
  $\mathcal{R}(M)$, its rank by $\rank{M}$, and its Moore-Penrose
  pseudoinverse by $M^\dagger$. For subspaces
  $\Uc, \Vc \subseteq \real^n$, we denote the orthogonal projection
  onto $\Uc$ by $\Pc_{\Uc}$, and the orthogonal complement by
  $\Uc^\perp$. We write $\Uc \oplus \Vc$ to denote the orthogonal
  direct sum when $\Uc \cap \Vc = \{0\}$ and $\Uc \perp \Vc$. For a
  Hilbert space $\Hc$ with inner product
  $\langle \cdot, \cdot \rangle_{\Hc}$ and induced norm
  $\|\cdot\|_{\Hc}$, subspaces of observables are denoted by
  calligraphic letters (e.g., $\Sc, \Vc$). We use $\text{span}(\cdot)$
  to denote the linear span and $\text{dim}(\cdot)$ to denote
  dimension. The distance from a point $x$ to a subspace $\Uc$ is
  $\text{dist}(x, \Uc) = \inf_{u \in \Uc} \|x - u\|$. We denote
  $i = 1 , \dots, k$ as $i \in [k]$ for brevity.  }, we review the
theoretical foundations of the Koopman operator and its
finite-dimensional approximations. We specifically formulate Extended
Dynamic Mode Decomposition (EDMD) as an orthogonal projection. We then
take a brief detour to review the geometric concepts of principal
angles and vectors. Leveraging these geometric tools, we introduce the
concept of invariance proximity to quantify the quality of Koopman
approximations.

\subsection{The Koopman Operator}
Following~\cite{AM-YS-IM:20}, consider a discrete-time dynamical
system on the state space $\mathcal{X} \subseteq \mathbb{R}^n$
described by a map $T: \mathcal{X} \to \mathcal{X}$:
\begin{equation}\label{eqn:sys_dynamics}
  x^+ = T(x), \quad x \in \mathcal{X}.
\end{equation}
The Koopman operator $\mathcal{K}: \mathcal{F} \to \mathcal{F}$ is an
infinite-dimensional linear operator that acts on a space of
real-valued observables
$\mathcal{F} \ni \psi: \mathcal{X} \to \mathbb{R}$ by composing them
with the dynamics: $(\mathcal{K}\psi)(x) = \psi(T(x))$.

We assume that the function space $\mathcal{F}$ is closed under
composition with $T$. Although the state-space map $T$ may be
nonlinear, the Koopman operator $\mathcal{K}$ acts linearly on
observables. This linearity enables the use of spectral methods: the
eigenvalues and eigenfunctions of $\mathcal{K}$ encode the long-term
asymptotic behavior of the system through the evolution of observable
functions. 
%
%

\subsection{EDMD as an Orthogonal Projection}\label{sec:EDMD}
Since $\mathcal{K}$ operates on an infinite-dimensional space
$\mathcal{F}$, practical implementations must approximate it on
finite-dimensional subspaces. We equip the space of observables
$\mathcal{F}$ with a Hilbert space structure by defining an inner
product $\langle \cdot, \cdot \rangle_{\mathcal{F}}$ and the
associated norm $\|\cdot\|_{\mathcal{F}}$. A common choice is the
$L_2$-space with respect to a probability measure $\mu$ on
$\mathcal{X}$. Let $\mathcal{S} \subset \mathcal{F}$ be a subspace
spanned by a finite set of linearly independent functions
$\Psi = \{\psi_1, \dots, \psi_s\}$ (the dictionary).


The goal of data-driven approximation is to find a finite-dimensional
operator $K: \mathcal{S} \to \mathcal{S}$ that best represents the
action of $\mathcal{K}$ when restricted to $\mathcal{S}$. Extended
Dynamic Mode Decomposition (EDMD) \cite{lQL-FD-EMB-IGK:17} provides
the optimal approximation in the $L_2$-sense by orthogonally
projecting the image of the subspace back onto itself.  Formally, let
$P_{\mathcal{S}}: \mathcal{F} \to \mathcal{S}$ denote the orthogonal
projection operator onto $\mathcal{S}$. The EDMD approximation is
given by $K_{\text{EDMD}} \triangleq P_{\mathcal{S}} \mathcal{K}|_{\mathcal{S}}$.

Given a dataset of snapshot pairs $\{(x_i, x_i^+)\}_{i=1}^N$ where
$x_i^+ = T(x_i)$, we construct the data matrices
$\Psi(X), \Psi(X^+) \in \mathbb{R}^{M \times s}$, where the $i$-th rows
are the evaluations of the dictionary functions at $x_i$ and $x_i^+$
respectively.
%
%
The matrix representation of $K_{\text{EDMD}}$ in the
basis $\Psi$ is the solution to the least-squares problem
\begin{align}
  \label{eq:LS-EDMD}
  \min_K \|\Psi(X) K - \Psi(X^+)\|_F,   
\end{align}
given explicitly by
$K = \Psi(X)^\dagger \Psi(X^+) \in \mathbb{R}^{s \times s}$.

This projection interpretation of EDMD reveals a critical limitation:
if the subspace $\mathcal{S}$ is not invariant under $\mathcal{K}$
(i.e., $\mathcal{K}\mathcal{S} \not\subseteq \mathcal{S}$), the
projection $P_{\mathcal{S}}$ discards the component of the dynamics
that evolves orthogonal to $\mathcal{S}$, leading to approximation
errors.

\subsection{Principal Angles and Vectors}\label{sec:principal-angles}
Here we recall the basic definitions and properties of principal
angles and vectors~\cite{AB-GHG:73},
%
%
which will be useful to rigorously quantify the alignment between the
chosen subspace and its evolution under the Koopman operator.

\begin{definition}[Principal Angles and Vectors]
\label{defn:pa_pv}
Let \((\mathcal{H},\langle\cdot,\cdot\rangle)\) be a Hilbert space,
and let \(\mathcal{U},\mathcal{V}\subset\mathcal{H}\) be subspaces
with \(\dim(\mathcal{U})=d_1\) and \(\dim(\mathcal{V})=d_2\).  The
principal angles
$0 \leq \theta_1 \leq \cdots \leq \theta_k \leq \frac{\pi}{2}$
between $\mathcal{U}$ and $\mathcal{V}$, where
$k = \min\{d_1, d_2\}$, are defined recursively as follows:
\begin{align*}
  \cos \theta_j = \max_{u \in \mathcal{U},\, v \in \mathcal{V}} \;
  &
    |\langle
    u,
    v
    \rangle|
  \\ 
  \text{subject to} \quad & \|u\| = \|v\| = 1,
  \\
  & \langle u, u_i \rangle = 0, \; \langle v, v_i \rangle = 0, \,\,
    i = 1, \ldots, j-1, 
\end{align*}
where $u_i, v_i$ are the principal vectors corresponding to the
previous $(j-1)$ angles. The vectors $(u_j, v_j)$ achieving the
maximum are called the $j$-th pair of principal vectors.
\end{definition}

\begin{remark}
  The principal vectors $\{u_j\}_{j=1}^k$ and $\{v_j\}_{j=1}^k$ are
  orthonormal, i.e.,
  \begin{equation*}  
    \langle u_i,u_j\rangle=\delta_{ij},\quad
    \langle v_i,v_j\rangle=\delta_{ij},\quad i,j=1,\dots,k.
  \end{equation*}
  In particular, the principal vectors can be extended to bases of
  their subspaces. For instance, if $k=\dim(\Uc)$, then
  $\{u_j\}_{j=1}^k$ is an orthonormal basis of $\Uc$, and
  $\{v_j\}_{j=1}^k$ can be augmented to an orthonormal basis of
  $\Vc$. \oprocend
\end{remark}

One can show, cf.~\cite[Proposition~4.4]{MH-JC:24-csl-arxiv-revised}, that the
principal vectors $\{u_j\}_{j=1}^k$ and $\{v_j\}_{j=1}^k$ satisfy
$\langle u_i,v_j\rangle=\delta_{ij} \, \cos \theta_i$, for all
$i,j=1,\dots,k$. Next, we describe how to compute the principal angles
and vectors in the Euclidean setting, i.e., $\mathcal{H} = \real^n$,
via the Singular Value Decomposition (SVD).
%
%

\begin{theorem}[Computation via SVD
  \cite{AB-GHG:73}]\label{thm:Golub1973}
  Let $\Uc,\Vc\subset\real^n$ have orthonormal basis matrices
  $Q_{\Uc}\in\real^{n\times d_1}$ and $Q_{\Vc}\in\real^{n\times d_2}$,
  and set $k=\min\{d_1,d_2\}$.  Compute the compact SVD
  %
  %
  \[
    \widetilde U \, \Sigma \, \widetilde V^\top \;=\; Q_{\Uc}^\top Q_{\Vc} ,
    \qquad
    \Sigma=\operatorname{diag}(\sigma_1,\ldots,\sigma_k),
  \]
  %
  %
  where $\sigma_1\ge\cdots\ge\sigma_k\ge0$.  The principal
  angles $\{\theta_j\}_{j=1}^k$ and vectors $\{u_j\}_{j=1}^k$,
  $\{v_j\}_{j=1}^k$ between $\Uc$ and $\Vc$ satisfy
  \begin{equation}
    \cos\theta_j=\sigma_j,\,u_j = Q_{\Uc}\,\widetilde u_j,\, v_j =
    Q_{\Vc}\,\widetilde v_j,  j=1,\dots,k,   
  \end{equation}
  where $\widetilde u_j$ and $\widetilde v_j$ denote the $j$-th
  columns of $\widetilde U \in \real^{d_1\times k}$ and
  $\widetilde V \in \real^{d_2\times k}$. If $\sigma_j=0$, then
  $\theta_j=\tfrac{\pi}{2}$, and the corresponding vectors may be
  chosen from the appropriate nullspaces.
\end{theorem}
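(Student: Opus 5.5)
The plan is to reduce the recursive variational problem in Definition~\ref{defn:pa_pv} to the variational characterization of singular values of the small matrix $M \triangleq Q_{\Uc}^\top Q_{\Vc}\in\real^{d_1\times d_2}$. Since $Q_{\Uc}$ and $Q_{\Vc}$ have orthonormal columns, every unit $u\in\Uc$ can be written uniquely as $u=Q_{\Uc}a$ with $a\in\real^{d_1}$, $\|a\|=1$. Likewise every unit $v\in\Vc$ is $v=Q_{\Vc}b$ with $\|b\|=1$. Under this parametrization,
\[
\langle u,v\rangle = a^\top Q_{\Uc}^\top Q_{\Vc}\, b = a^\top M b ,
\]
and orthogonality is preserved: $\langle u,u_i\rangle = a^\top a_i$ and $\langle v,v_i\rangle=b^\top b_i$. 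Hence the $j$-th step of the definition becomes
\[
\cos\theta_j=\max\{\,|a^\top M b| : \|a\|=\|b\|=1,\ a\perp a_1,\dots,a_{j-1},\ b\perp b_1,\dots,b_{j-1}\,\}.
\]
We may also drop the absolute value, because replacing $a$ by $-a$ leaves the constraints unchanged.

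The core step is an induction on $j$, showing that this recursive maximization is solved by $a_j=\widetilde u_j$, $b_j=\widetilde v_j$ with value $\sigma_j$. Write the full SVD $M=\sum_{i}\sigma_i \widetilde u_i\widetilde v_i^\top$. For $j=1$, the bound $a^\top M b\le \|M\|_2=\sigma_1$ holds, with equality at $(\widetilde u_1,\widetilde v_1)$. For the inductive step, assume the first $j-1$ maximizers are $\widetilde u_i,\widetilde v_i$. Any feasible $a,b$ then lie in $\operatorname{span}\{\widetilde u_i\}_{i\ge j}$ and $\operatorname{span}\{\widetilde v_i\}_{i\ge j}$. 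Expanding in these bases gives $a^\top M b=\sum_{i\ge j}\sigma_i \alpha_i\beta_i\le \sigma_j\sum|\alpha_i\beta_i|\le\sigma_j$ by Cauchy--Schwarz, and equality is attained at $(\widetilde u_j,\widetilde v_j)$. Mapping back through $Q_{\Uc}$ and $Q_{\Vc}$ gives $\cos\theta_j=\sigma_j$, $u_j=Q_{\Uc}\widetilde u_j$ and $v_j=Q_{\Vc}\widetilde v_j$.

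The main obstacle is non-uniqueness. With repeated singular values, the maximizers chosen at earlier steps need not coincide with the SVD columns. The induction must therefore be phrased as follows: whatever admissible maximizers were chosen, there exists an SVD of $M$ whose leading columns they are. To handle this, I will show that equality in the Cauchy--Schwarz step forces $a,b$ to lie in the $\sigma_j$-singular subspace with $Ma=\sigma_j b$ up to the pairing. This lets the previously chosen pairs be completed to a valid SVD. The statement ``satisfy'' is read in exactly this sense: the vectors produced by the SVD are a valid choice of principal vectors, and the angles are unique.

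Finally, consider the case $\sigma_j=0$. All remaining feasible pairs give $a^\top Mb=0$, so $\theta_j=\pi/2$. Any unit vectors orthogonal to the previous ones are then admissible, in particular vectors from the null spaces of $M$ and $M^\top$. This justifies the last sentence of Theorem~\ref{thm:Golub1973}.
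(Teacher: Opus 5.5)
Your proposal is correct and follows the same route as the source the paper relies on: the paper gives no proof of its own and quotes the result from Bj\"{o}rck--Golub, whose argument is your reduction $u=Q_{\Uc}a$, $v=Q_{\Vc}b$, $\langle u,v\rangle=a^\top M b$ followed by the recursive max characterization of the singular values of $M$; you go further by proving that characterization directly via Cauchy--Schwarz and by treating non-uniqueness. Two small fixes are needed in the non-uniqueness part: the equality case should read $Mb=\sigma_j a$ and $M^\top a=\sigma_j b$ (since $M\in\real^{d_1\times d_2}$), and because the paper's definition maximizes $|\langle u,v\rangle|$, an admissible maximizer may have $a^\top M b=-\sigma_j$, so the completion to an SVD holds only after replacing $b$ by $-b$, which is harmless because the later orthogonality constraints are unchanged by this sign flip.
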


\subsection{Invariance Proximity}
As reasoned in Section~\ref{sec:EDMD}, to ensure accuracy of the
Koopman approximation via EDMD, the chosen finite-dimensional subspace
$\mathcal{S}$ should be as close to invariant as possible. We quantify
this ``closeness'' using the geometric concept of principal angles.

Let
$\mathcal{K}\mathcal{S} = \text{span}\{\mathcal{K}\phi \mid \phi \in
\mathcal{S}\}$ denote the image of the subspace under the
operator and let $\{\theta_i\}_{i=1}^s \subset [0, \pi/2]$ be the principal angles between $\mathcal{S}$ and $\mathcal{K}\mathcal{S}$.

\begin{definition}[Invariance
  Proximity~\cite{MH-JC:24-csl-arxiv-revised}]\label{definition:IPT} 
  The invariance proximity of a subspace $\mathcal{S}$ with respect to
  the operator $\mathcal{K}$ is defined by
  $\delta(\mathcal{S}) \triangleq \sin \theta_{\max}(\mathcal{S},
  \mathcal{K}\mathcal{S})$. \oprocend
\end{definition}
%
%

A value of $\delta(\mathcal{S}) = 0$ means that
$\mathcal{K}\mathcal{S} \subseteq \mathcal{S}$, indicating that
$\mathcal{S}$ is an invariant subspace.  The following result
clarifies the practical significance of invariance proximity by
showing that it exactly corresponds to the worst-case prediction error
of the EDMD model over the subspace.

\begin{theorem}[Worst-Case Relative Prediction Error \cite{MH-JC:24-csl-arxiv-revised}]
  Let $\mathcal{S} \subset \mathcal{F}$ be a finite-dimensional
  subspace and let
  $K_{\text{EDMD}} = P_{\mathcal{S}} \mathcal{K}|_{\mathcal{S}}$ be
  the EDMD approximation of the Koopman operator on
  $\mathcal{S}$. Then, 
  \begin{equation}
    \delta(\mathcal{S}) = \sup_{\substack{f \in \mathcal{S}
        \\
        \|\mathcal{K}f\| \neq 0}} \frac{\| \mathcal{K}f -
      K_{\text{EDMD}}f \|}{\| \mathcal{K}f \|}. 
  \end{equation}
\end{theorem}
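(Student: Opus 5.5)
The plan is to reduce the supremum to a Rayleigh-quotient-type expression over $\mathcal{K}\mathcal{S}$ and then diagonalize it using the principal vectors of Theorem~\ref{thm:Golub1973}. First, since $K_{\text{EDMD}}f = P_{\mathcal{S}}\mathcal{K}f$, the numerator is $\|(I-P_{\mathcal{S}})\mathcal{K}f\|$. As $f$ ranges over $\mathcal{S}$ with $\mathcal{K}f\neq 0$, the vector $g=\mathcal{K}f$ ranges over exactly $\mathcal{K}\mathcal{S}\setminus\{0\}$. The claim is therefore equivalent to
\[
\sup_{0\neq g\in\mathcal{K}\mathcal{S}} \frac{\|(I-P_{\mathcal{S}})g\|}{\|g\|} = \sin\theta_{\max}(\mathcal{S},\mathcal{K}\mathcal{S}).
\]
Note that $\dim \mathcal{K}\mathcal{S}\le \dim\mathcal{S}$, because $\mathcal{K}\mathcal{S}$ is the linear image of an $s$-dimensional space. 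Hence the number of principal angles is $k=\dim\mathcal{K}\mathcal{S}$, and this is the reading of $\theta_{\max}$ I will use. I will also assume $\mathcal{K}\mathcal{S}\neq\{0\}$, since otherwise the supremum is vacuous.

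Second, I will transfer the problem to Euclidean space. All vectors involved lie in the finite-dimensional subspace $\mathcal{H}'=\mathcal{S}+\mathcal{K}\mathcal{S}$. Choosing an orthonormal basis of $\mathcal{H}'$ gives an isometry with $\real^n$ that preserves inner products, projections and principal angles, so Theorem~\ref{thm:Golub1973} applies. Let $Q_{\mathcal{S}}$ and $Q_{\mathcal{K}\mathcal{S}}$ be orthonormal basis matrices and write the SVD $Q_{\mathcal{S}}^\top Q_{\mathcal{K}\mathcal{S}}=\widetilde U\Sigma\widetilde V^\top$, with $\widetilde V$ square ($k\times k$) since $k=\dim\mathcal{K}\mathcal{S}$. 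The principal vectors $v_j=Q_{\mathcal{K}\mathcal{S}}\widetilde v_j$ then form an orthonormal basis of $\mathcal{K}\mathcal{S}$.

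The key identity is $P_{\mathcal{S}}v_j=\cos\theta_j\,u_j$. It follows from
\[
P_{\mathcal{S}}v_j=Q_{\mathcal{S}}Q_{\mathcal{S}}^\top Q_{\mathcal{K}\mathcal{S}}\widetilde v_j=Q_{\mathcal{S}}\widetilde U\Sigma\widetilde V^\top\widetilde v_j=\sigma_j Q_{\mathcal{S}}\widetilde u_j .
\]
This step needs care because it relies on the compact SVD covering all of $\mathcal{K}\mathcal{S}$. In particular, $v_j$ must have no component along directions of $\mathcal{S}$ orthogonal to all the $u_i$, and this is exactly what the matrix identity guarantees.

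Third, I will diagonalize the ratio. Write $g=\sum_{j}c_jv_j$. By orthonormality of the $u_j$ and of the $v_j$, we get $\|g\|^2=\sum_j c_j^2$ and $\|P_{\mathcal{S}}g\|^2=\sum_j c_j^2\cos^2\theta_j$. By Pythagoras, $\|(I-P_{\mathcal{S}})g\|^2=\sum_j c_j^2\sin^2\theta_j$. The squared ratio is thus a convex combination of the values $\sin^2\theta_j$, so it is at most $\sin^2\theta_{\max}$. Equality holds at $g=v_k$, and the preimage $f\in\mathcal{S}$ with $\mathcal{K}f=v_k$ exists by definition of $\mathcal{K}\mathcal{S}$. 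Hence the supremum is attained and equals $\delta(\mathcal{S})$.

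I expect the main obstacle to be the bookkeeping around dimensions rather than the computation itself. Specifically, one must justify that $k=\dim\mathcal{K}\mathcal{S}$, so that the $v_j$ span all of $\mathcal{K}\mathcal{S}$ and no spurious angles equal to $\pi/2$ are counted, and handle the degenerate case in which $\mathcal{K}$ is not injective on $\mathcal{S}$.
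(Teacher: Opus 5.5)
Your argument is correct and complete. The paper gives no proof of this statement; it imports it from \cite{MH-JC:24-csl-arxiv-revised}. The natural in-paper comparison is therefore the appendix Lemma~\ref{lemma:cos_theta_max}. Apply it with $\mathcal{U}=\mathcal{K}\mathcal{S}$ and $\mathcal{V}=\mathcal{S}$, which is allowed because principal angles are symmetric in the pair and $\dim\mathcal{K}\mathcal{S}\le\dim\mathcal{S}$. Take $k=1$, so there are no orthogonality constraints. This yields $\cos\theta_{\max}=\min_{0\neq g\in\mathcal{K}\mathcal{S}}\|P_{\mathcal{S}}g\|/\|g\|$ directly, and your first-step reduction plus Pythagoras then finishes the proof in one line.

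Your SVD diagonalization through $P_{\mathcal{S}}v_j=\cos\theta_j\,u_j$ is in effect a self-contained proof of that special case of the lemma. What it buys you is an explicit maximizer: any $f\in\mathcal{S}$ with $\mathcal{K}f=v_k$, so the supremum is attained. It also gives a transparent handling of the rank-deficient case.

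The bookkeeping you flag as the main obstacle is already fully resolved by working over $g\in\mathcal{K}\mathcal{S}$ with $k=\dim\mathcal{K}\mathcal{S}$. Non-injectivity of $\mathcal{K}$ on $\mathcal{S}$ requires nothing further. It is worth stating explicitly that both choices are forced:
\begin{itemize}
\item The variational characterization must be taken over the lower-dimensional subspace $\mathcal{K}\mathcal{S}$. Minimizing $\|P_{\mathcal{K}\mathcal{S}}x\|/\|x\|$ over $x\in\mathcal{S}$ instead would give $0$ whenever $\dim\mathcal{K}\mathcal{S}<\dim\mathcal{S}$.
\item $\theta_{\max}$ must not be padded with angles equal to $\pi/2$, as the paper's indexing $\{\theta_i\}_{i=1}^s$ might suggest. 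For example, $\mathcal{K}\phi_1=\phi_1$, $\mathcal{K}\phi_2=0$ gives a supremum of $0$, while a padded $\theta_{\max}$ would be $\pi/2$.
\end{itemize}
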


This result implies that minimizing invariance proximity
$\delta(\mathcal{S})$ directly minimizes the maximum relative error
incurred by the EDMD predictor for any function in the subspace.

\section{Problem Statement}


The overarching goal of this work is to construct effective
finite-dimensional Koopman models for unknown dynamical
systems. Building upon the framework in \cite{MH-JC:25-access}, we
cast this goal as a subspace search problem guided by the metric of
invariance proximity. Let $\mathcal{K}$ denote the Koopman operator
for the dynamics in \eqref{eqn:sys_dynamics}, and let
$\mathcal{S}_{\text{init}}$ represent a broad initial subspace
generated by a finite dictionary of candidate functions. Ideally, one
would extract a strictly invariant subspace
$\mathcal{S} \subseteq \mathcal{S}_{\text{init}}$ capable of accurate
state reconstruction. Because exact invariance is practically
unattainable with finite dictionaries, we relax this requirement to
find a subspace that obtains a user-defined error tolerance. To maintain
the expressivity of the resulting Koopman model, we additionally
require this approximately invariant subspace to have the maximum
possible dimension. We formalize this objective using invariance
proximity as follows.

\begin{problem}[Invariant Subspace Search]\label{problem:subspace_search}
  Given an initial subspace $\mathcal{S}_{\text{init}}$ and a
  tolerance $\epsilon \in [0, 1)$, find a subspace
  $\mathcal{S}^* \subseteq \mathcal{S}_{\text{init}}$ of the largest
  possible dimension such that:
  \begin{equation}\label{eq:problem-bound}
    \delta(\mathcal{S}^*) \le \epsilon.
  \end{equation}
\end{problem}

Finding an exact solution to Problem \ref{problem:subspace_search} is
computationally intractable due to the combinatorial explosion of
searching all possible subspaces. Therefore, our goal shifts to
designing efficient algorithms capable of isolating sufficiently
high-dimensional subspaces that meet the specified invariance
proximity tolerance, even if they fall short of the absolute maximum
dimension. We aim to maximize this dimension primarily to maintain the
model's expressivity. Should the identified subspace $\mathcal{S}^*$
prove inadequate for state reconstruction, one must either expand the
initial candidate dictionary $\mathcal{S}_{\text{init}}$ or accept a
larger error margin by increasing the tolerance $\epsilon$.

To solve Problem~\ref{problem:subspace_search}, the algorithms
introduced below systematically improve invariance proximity by
iteratively pruning non-invariant directions from
$\mathcal{S}_{\text{init}}$. While one could alternatively minimize
$\delta$ by directly learning the dictionary functions $\Psi$ via
neural networks, such optimization landscapes are typically
non-convex, computationally expensive, and devoid of strict
deterministic guarantees. In contrast, our subspace-pruning strategy
leverages the linear algebraic properties of the function space to
surgically eliminate leaky directions from a fixed dictionary. This
approach yields highly efficient algorithms equipped with rigorous
deterministic bounds.

%
%


\section{Subspace Pruning via Principal
  Vectors}\label{sec:pruning_algorithms}

In this section, we propose a systematic approach to solve Problem
\ref{problem:subspace_search} by iteratively refining the
dictionary. Before proceeding, we clarify the standing assumptions and
notation used throughout the remainder of the paper.

\noindent \textbf{Standing Assumption:} Throughout the remainder of
this paper, we consider a finite-dimensional subspace
$\mathcal{S} \subset \mathcal{F}$ and its image
$\mathcal{K}\mathcal{S}$ under the Koopman operator, with
$\dim(\mathcal{S}) = \dim(\mathcal{K}\mathcal{S}) = s$. We denote
their principal angles, arranged in increasing order, and the
corresponding principal vectors by
\begin{equation}
\{\theta_i \}_{i=1}^s \subset [0, \pi/2], \, \{u^{\Sc}_i\}_{i=1}^s \subset \Sc, \, \{\Kc v_i^{\Kc \Sc}\}_{i=1}^s \subset \Kc \Sc.
\label{eqn:setting_principal_arguments}
\end{equation}
These arguments are computed with respect to the inner product $\langle \cdot, \cdot \rangle_{\mathcal{F}}$ on $\mathcal{F}$, for example the $L_2$ inner product induced by the data measure. 

We begin by describing how to compute principal angles and vectors in the
data-driven $L_2(\mu_X)$ setting, which is the first step in our proposed pruning algorithm. 

\subsection{Computation of Principal Angles and Vectors in
  $L_2(\mu_X)$}

Consider the nonlinear system~\eqref{eqn:sys_dynamics}. We utilize a
dataset of $N$ snapshot pairs organized into data matrices
$X, X^+ \in \mathbb{R}^{N \times n}$, where
$X = [x_1, \dots, x_N]^\top$ and $X^+ = [x_1^+, \dots, x_N^+]^\top$
with $x_i^+ = T(x_i)$ for $i=1,\dots,N$. We fix an initial dictionary
of observables $\Psi = [\psi_1, \dots, \psi_s]$ and define the lifted
data matrices $A = \Psi(X) \in \mathbb{R}^{N \times s}$ and
$B = \Psi(X^+) \in \mathbb{R}^{N \times s}$. In this data-driven
framework, we equip the space of observables with the $L_2(\mu_X)$
inner product induced by the empirical data measure $\mu_X = \frac{1}{N} \sum_{i=1}^N \delta_{x_i}$. Specifically, for observables $f,g \in L_2(\mu_X)$, this inner product
is given by
$$
\langle f, g \rangle_{L_2(\mu_X)} = \int_{\mathcal{X}} f(x) g(x) \,
d\mu_X(x) = \frac{1}{N} \sum_{i=1}^N f(x_i) g(x_i).
$$
Under this inner product, we define the discrete evaluation map
$\mathcal{E}: L_2(\mu_X) \rightarrow \mathbb{R}^N$ as:
\begin{equation}
  \mathcal{E}(f) = \begin{bmatrix} f(x_1) \, \cdots \, f(x_N) \end{bmatrix}^\top.
\label{eq:isomorpshism_map}  
\end{equation}
The evaluation map $\Ec|_{\Sc}$ forms an isomorphism between the
candidate function subspace $\mathcal{S}$ and the Euclidean column
space $\mathcal{R}(A)$, and similarly the map $\Ec|_{\Kc \Sc}$ forms
an isomorphism between $\mathcal{K}\mathcal{S}$ and the column space
$\mathcal{R}(B)$. We utilize this isomorphism throughout the paper,
allowing us to compute functional geometric properties---such as
principal angles and orthogonal projections---directly via their
finite-dimensional Euclidean representations.

Specifically, let
$S = \text{span}(\Psi) \subset \Fc \subseteq L_2(\mu_X)$ be the
subspace spanned by the dictionary. Under the map
\eqref{eq:isomorpshism_map},
\begin{equation}
  \Sc \equiv \Rc(A), \quad \Kc \Sc \equiv \Rc(B).
  \label{eq:iso_equiv}
\end{equation}
We can compute the principal angles and vectors between $\Rc(A)$ and
$\Rc(B)$ using the SVD-based procedure from
Theorem~\ref{thm:Golub1973}. As the following result shows, this gives
us the corresponding elements between $\Sc$ and $\Kc \Sc$.

\begin{proposition}\longthmtitle{Computation of principal angles and vectors in $L_2(\mu_X)$}\label{prop:pa_L2}
Let $\{A u_i^A\}_{i=1}^{{s}} \subset \Rc(A)$,
$\{B v_i^B\}_{i=1}^{{s}} \subset \Rc(B)$, and
$\{\theta_i(\Rc(A), \Rc(B))\}_{i=1}^{{s}}$ be the principal vectors
and angles between $\Rc(A)$ and $\Rc(B)$, where
$u_i^A, v_i^B \in \real^s$ are the coefficients of the principal
vectors in the bases of $\Rc(A)$ and $\Rc(B)$, resp. 
Then,
\begin{enumerate}
\item $\theta_i(\Sc, \Kc \Sc) = \theta_i(\Rc(A), \Rc(B))$, for all
  $i \in [s]$,
\item $ u_i^{\Sc}(\cdot) = \Psi(\cdot) u_i^A$,
  $v_i^{\Kc \Sc}(\cdot) = \Psi(\cdot) v_i^B$, for all $ i \in [s]$.
\end{enumerate}
\end{proposition}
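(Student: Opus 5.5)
The argument will rest on one observation: the evaluation map $\Ec$ is an isometry up to the constant factor $1/\sqrt{N}$. For $f,g \in L_2(\mu_X)$ we have
\[
\langle f,g\rangle_{L_2(\mu_X)} = \tfrac{1}{N}\,\Ec(f)^\top \Ec(g).
\]
Moreover $\Ec(\Psi(\cdot)c) = Ac$ and $\Ec(\Kc\Psi(\cdot)c) = \Ec(\Psi(T(\cdot))c) = Bc$ for every coefficient vector $c \in \real^s$. By the standing assumption the maps $c\mapsto Ac$ and $c\mapsto Bc$ are injective, so $\Ec|_{\Sc}$ and $\Ec|_{\Kc\Sc}$ are linear bijections onto $\Rc(A)$ and $\Rc(B)$, as in \eqref{eq:iso_equiv}. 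The whole proof consists of transporting the variational problem of Definition~\ref{defn:pa_pv} through this scaled isometry.

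\textbf{Step 1: the angles are invariant under a common rescaling.} Let $\phi = \sqrt{N}\,\Ec$, which is a genuine isometry from $\Sc+\Kc\Sc$ onto $\Rc(A)+\Rc(B)$. Inner products and norms are then preserved exactly. Orthogonality constraints are preserved too, since $\langle u,u_i\rangle=0$ if and only if $\phi(u)^\top\phi(u_i)=0$. Consequently the feasible sets and objective values of the recursive maximization in Definition~\ref{defn:pa_pv} for the pair $(\Sc,\Kc\Sc)$ are in bijection with those for the pair $(\Rc(A),\Rc(B))$.

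\textbf{Step 2: induction on $j$.} Suppose the first $j-1$ principal pairs correspond under $\phi$. Then the $j$-th optimization problems coincide, so their optimal values agree, giving
\[
\cos\theta_j(\Sc,\Kc\Sc)=\cos\theta_j(\Rc(A),\Rc(B)).
\]
This proves part (a). The maximizers also correspond: a pair $(Au_j^A,\,Bv_j^B)$ attaining the maximum in the Euclidean problem is sent by $\phi^{-1}$ to a maximizing pair in $\Sc\times\Kc\Sc$. The factor $\sqrt{N}$ is absorbed into the normalization of the coefficient vectors, that is, into how $u_j^A$ and $v_j^B$ are scaled so that the functional norm is one. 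The induction therefore closes.

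\textbf{Step 3: identifying the functions.} Since $\Ec^{-1}(Au)=\Psi(\cdot)u$ on $\Sc$, we get $u_i^{\Sc}=\Psi(\cdot)u_i^A$. For the $\Kc\Sc$ side, $\Ec^{-1}(Bv)=\Kc(\Psi(\cdot)v)$, so the principal vector in $\Kc\Sc$ is $\Kc v_i^{\Kc\Sc}$ with $v_i^{\Kc\Sc}=\Psi(\cdot)v_i^B$. This matches the notation of \eqref{eqn:setting_principal_arguments} and proves part (b).

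\textbf{Main obstacle.} The mathematics is not deep; the delicate points are bookkeeping ones. First, the $1/N$ scaling means a vector that is unit-norm in $\real^N$ is not unit-norm in $L_2(\mu_X)$. I would handle this by working with $\phi=\sqrt{N}\,\Ec$, or equivalently by noting that angles depend only on normalized inner products. Second, principal vectors are not unique when angles repeat, so part (b) should be read as ``a valid choice of principal vectors.'' Third, functions in $L_2(\mu_X)$ are identified up to equality on the sample points, which is what makes $\Ec$ injective on $\Sc$.
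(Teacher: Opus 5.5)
Your proof is correct and is essentially the paper's argument (transport the recursive problem of Definition~\ref{defn:pa_pv} through the evaluation map $\Ec$), written out more carefully: you use that a single map on $\Sc+\Kc\Sc$ rescales all inner products uniformly, which is what the argument actually needs rather than the bare isomorphism property the paper cites, and you rightly flag the normalization issue in part (b) that the paper passes over. One slip: since $\langle f,g\rangle_{L_2(\mu_X)}=\tfrac1N\Ec(f)^\top\Ec(g)$, the genuine isometry is $\phi=\Ec/\sqrt{N}$, not $\sqrt{N}\,\Ec$ (which scales inner products by $N^2$), so for unit-norm $Au_i^A$ one gets exactly $u_i^{\Sc}=\sqrt{N}\,\Psi(\cdot)u_i^A$; this does not affect the angles, which are invariant under any uniform rescaling.
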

\begin{proof}
  Since $\Ec|_{\Sc}$ and $\Ec|_{\Kc \Sc}$ are isomorphisms, the
  principal angles between $\Sc$ and $\Kc \Sc$ are the same as those
  between $\Rc(A)$ and $\Rc(B)$ due to the
  equivalence~\eqref{eq:iso_equiv}. This establishes the first
  claim. For the second claim, it is easy to verify that
  $\Ec(u_i^{\Sc}) = A u_i^A$ and $\Ec(\Kc v_i^{\Kc \Sc}) = B v_i^B$
  and hence, $\{u_i^{\Sc}\}, \{\Kc v_i^{\Kc \Sc}\}$ satisfy
  Definition~\ref{defn:pa_pv} and are the principal vectors between
  $\Sc$ and $\Kc \Sc$.
\end{proof}

The above discussion clarifies how to compute principal arguments for
the specific $L_2(\mu_X)$ inner product used in the data-driven
setting.
Leveraging this finite-dimensional identification, the algorithms
presented below proceed by iteratively pruning the subspace
$\mathcal{S}$ to find a target subspace $\mathcal{S}^*$ that satisfies
the invariance bound~\eqref{eq:problem-bound}.


\subsection{Single-Principal-Vector (SPV) Pruning}
Our solution to Problem \ref{problem:subspace_search} leverages the geometric interpretation of invariance proximity. Recall from Definition~\ref{definition:IPT} that $\delta(\Sc)$ is determined by the largest principal angle between the subspace $\Sc$ and its image $\Kc\Sc$. The principal vectors associated with these large angles pinpoint where the dynamics ``leak'' out of the subspace. To address this, the SPV algorithm operates on a ``worst-offender'' principle: we iteratively remove the specific dimension most responsible for violating invariance. At each step, we identify the principal vector $u_{\max}^{\Sc} \in \Sc$ corresponding to the maximum principal angle $\theta_{\max}$. By projecting the current subspace onto the orthogonal complement of $u_{\max}^{\Sc}$, we surgically remove this direction. This process is repeated until the maximum angle falls below a desired tolerance $\epsilon$, as formalized in Algorithm \ref{alg:naive-pruning}.

\begin{algorithm}[h]
  \caption{\textbf{SPV Pruning}}
  \label{alg:naive-pruning}
  \begin{algorithmic}[1]
    \REQUIRE $\Sc, \Kc \Sc \subset \Fc$, $\epsilon \in [0,1]$
    \STATE Initialize $\Sc_1 \gets \Sc$, \, $i \gets 0$
    \WHILE{True}
    \STATE $i \gets i + 1$
    \IF{$\Sc_{i} = \emptyset$} 
    \RETURN $\emptyset$ \RightComment{Terminate with failure}
    \ENDIF
    \STATE $\{u^{\Sc}_j\}, \{\theta_j\} \gets \text{Principal
      arguments}(\Sc_i, \Kc \Sc_i)$ 
    \smallskip
    \IF{$\sin \theta_{\max} \leq \epsilon$}
    \RETURN $\Sc_i$ \RightComment{Terminate with success}
    \ENDIF
    \STATE $\Sc_{i+1} \gets \Sc_i \setminus \text{span}(u_{\max}^{\Sc})$
    \ENDWHILE
  \end{algorithmic}
\end{algorithm}

\subsection{Equivalence of RFB-EDMD and SPV
  Algorithms}\label{sec:equivalence}

Here, we explain the equivalence of SPV pruning with the Recursive
Forward-Backward EDMD (RFB-EDMD) algorithm introduced
in~\cite{MH-JC:25-access}.  To do so, we start by introducing key
ingredients of RFB-EDMD. Given data matrices
$X, X^+ \in \mathbb{R}^{N \times n}$ from the nonlinear
system~\eqref{eqn:sys_dynamics}, consider the standard ``forward''
EDMD matrix $K_f = \Psi(X)^\dagger \Psi(X^+)$ (corresponding to the
forward-in-time evolution $x \to x^+$) and the ``backward'' EDMD
matrix $K_b = \Psi(X^+)^\dagger \Psi(X)$ (corresponding to the
backward-in-time evolution $x^+ \to x$).  Let $M_c = I - K_f K_b$ be
the \textit{consistency matrix} measuring the discrepancy between the
forward and backward predictions.
The next result establishes that the eigenvalues of the consistency
matrix are exactly the squared sines of the principal angles between
the search space $\mathcal{S}$ and its image
$\Kc \mathcal{S}$.
%
%
Furthermore, the result identifies the eigenvectors of
$M_c$ as the coefficients of the principal vectors in~$\Sc$.

\begin{lemma}\longthmtitle{Spectral Characterization of
    Consistency}\label{lem:spectral-consistency}
  Let $\Sc \subset \Fc$ be the subspace spanned by the
  dictionary~$\Psi = [\psi_1, \dots, \psi_s]$. Let
  $A = \Psi(X) \in \real^{N \times s}$ and
  $B = \Psi(X^+) \in \real^{N \times s}$ be the data matrices
  representing the domain and image of the Koopman operator on
  $\mathcal{S}$, with full column rank. Then, the consistency matrix
  $M_c = I - K_f K_b$ satisfies:
  \begin{enumerate}
  \item Its eigenvalues $\{\lambda_i \}_{i=1}^s$ are squared sines of
    the principal angles, i.e., $ \lambda_i = \sin^2 \theta_i$ 
    $\forall \,\, i \in [s]$;

  \item Its eigenvectors $\{v_i\}_{i=1}^s$ correspond to the principal
    vectors of $\mathcal{S}$, as specified by
    $u^{\Sc}_i(\cdot) = \Psi(\cdot)v_i$, $\forall \,\,i \in [s]$.
  \end{enumerate}
\end{lemma}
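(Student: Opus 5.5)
The plan is to rewrite $K_f K_b$ as a product of orthogonal projections and then reduce to the SVD picture of Theorem~\ref{thm:Golub1973}. Since $A$ and $B$ have full column rank, $A^\dagger = (A^\top A)^{-1}A^\top$ and $B^\dagger = (B^\top B)^{-1}B^\top$. Hence
\[
K_f K_b = A^\dagger B B^\dagger A = A^\dagger \Pc_{\Rc(B)} A .
\]
The matrix $A^\dagger \Pc_{\Rc(B)} A$ is the coordinate representation, in the basis $\Psi$ (equivalently, columns of $A$ under the isomorphism~\eqref{eq:iso_equiv}), of the operator $\Pc_{\Rc(A)}\Pc_{\Rc(B)}$ restricted to $\Rc(A)$. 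Indeed, for $c\in\real^s$ we have $A(A^\dagger \Pc_{\Rc(B)} A c) = \Pc_{\Rc(A)}\Pc_{\Rc(B)}(Ac)$. Consequently, the eigenpairs of $K_fK_b$ correspond exactly to eigenpairs of the compression $T := \Pc_{\Rc(A)}\Pc_{\Rc(B)}|_{\Rc(A)}$ via $w = Ac$.

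Next, I diagonalize $T$ using orthonormal bases. Take thin QR factorizations $A = Q_A R_A$ and $B = Q_B R_B$, with $R_A, R_B$ invertible. Then $T$ has the matrix $Q_A^\top Q_B Q_B^\top Q_A$ in the orthonormal basis $Q_A$. Write the SVD $Q_A^\top Q_B = \widetilde U \Sigma \widetilde V^\top$; here both dimensions equal $s$, so $\widetilde U$ and $\widetilde V$ are square orthogonal. This gives
\[
Q_A^\top Q_B Q_B^\top Q_A = \widetilde U \Sigma^2 \widetilde U^\top,
\]
so its eigenvalues are $\sigma_i^2 = \cos^2\theta_i$ with eigenvectors $\widetilde u_i$. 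By Theorem~\ref{thm:Golub1973}, the corresponding eigenvectors of $T$ are $Q_A\widetilde u_i$, which are the principal vectors $u_i$ of $\Rc(A)$.

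Translating back, $K_fK_b = R_A^{-1}\widetilde U\Sigma^2\widetilde U^\top R_A$, which follows from $A^\dagger = R_A^{-1}Q_A^\top$ and $A = Q_AR_A$. It is therefore similar to $\Sigma^2$ and has eigenvectors $v_i = R_A^{-1}\widetilde u_i$. For $M_c = I - K_fK_b$, the eigenvectors are the same and the eigenvalues are $1-\cos^2\theta_i = \sin^2\theta_i$, which gives item (a); here I use Proposition~\ref{prop:pa_L2} to identify angles in $\Rc(A),\Rc(B)$ with those between $\Sc$ and $\Kc\Sc$. For item (b), $A v_i = Q_A\widetilde u_i$ is the $i$-th principal vector of $\Rc(A)$, so $v_i$ plays the role of $u_i^A$ in Proposition~\ref{prop:pa_L2}, and hence $u_i^{\Sc}(\cdot)=\Psi(\cdot)v_i$.

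The main subtlety is the non-uniqueness of eigenvectors when principal angles repeat. In that case $M_c$ has eigenspaces of dimension greater than one, and an arbitrary eigenvector basis need not be $R_A^{-1}$ times a singular-vector basis with the orthonormality required by Definition~\ref{defn:pa_pv}. I will handle this by stating (b) in the sense that there is an eigenbasis $\{v_i\}$ of $M_c$ realizing the principal vectors, namely $v_i = R_A^{-1}\widetilde u_i$. Conversely, any eigenvector basis of an eigenspace that is orthonormal in the $A^\top A$-inner product yields valid principal vectors, because within a common singular value any orthonormal rotation of left singular vectors remains valid. The normalization of $v_i$ is fixed by requiring $\|Av_i\|=1$, consistent with the empirical inner product up to the constant factor $1/N$, which does not affect angles.
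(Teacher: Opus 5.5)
Your proposal is correct and follows the same route as the paper: QR factorizations of $A$ and $B$, the SVD of $Q_A^\top Q_B$, the similarity $M_c = (R_A^{-1}\widetilde U)\sin^2\Theta\,(R_A^{-1}\widetilde U)^{-1}$, and the identification $Av_i = Q_A\widetilde u_i$ via Proposition~\ref{prop:pa_L2}. Your compression viewpoint $\Pc_{\Rc(A)}\Pc_{\Rc(B)}|_{\Rc(A)}$ and your reading of item (b) as ``there exists an eigenbasis'' when angles repeat are clarifications the paper's proof leaves implicit, not a different argument.
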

\begin{proof}
  Using the definition of the forward $ K_f = A^\dagger B$ and
  backward $ K_b = B^\dagger A $ EDMD matrices, we have that
  $M_c = I - K_f K_b = I - A^\dagger B B^\dagger A$. Note that
  $P_B = B B^\dagger$ is the orthogonal projection onto
  $\mathcal{R}(B)$. Therefore, $M_c = I - A^\dagger P_B A$.  Consider
  the QR decompositions $A = Q_A R_A$ and $B = Q_B R_B$, where
  $Q_A, Q_B$ have orthonormal columns and $R_A, R_B$ are invertible
  upper triangular matrices. Substituting into the expression for
  $M_c$, we get
  \begin{align*}
    M_c &= I - (R_A^{-1} Q_A^{\top}) (Q_B Q_B^{\top}) (Q_A R_A) \\
        &= I - R_A^{-1} (Q_A^{\top} Q_B Q_B^{\top} Q_A) R_A .
  \end{align*}
  The term $Q_A^{\top} Q_B$ is the matrix of inner products between the
  orthonormal bases. Following Theorem~\ref{thm:Golub1973}, let the SVD of
  this matrix be $ Q_A^{\top} Q_B = U_A (\cos \Theta) V_B^{\top} $, where
  $\cos \Theta = \text{diag}(\cos \theta_1, \dots, \cos \theta_s)$
  contains the cosines of the principal angles. Substituting this SVD
  back, we obtain:
  \begin{align*}
    Q_A^{\top} Q_B Q_B^{\top} Q_A
    = U_A
      \cos^2 \Theta U_A^{\top} .
  \end{align*}
  Therefore, the consistency matrix becomes:
  \begin{align*}
    M_c &= I - R_A^{-1} (U_A \cos^2 \Theta U_A^{\top}) R_A
    \\
        &= R_A^{-1} (I - U_A \cos^2 \Theta U_A^{\top}) R_A \\
        &= R_A^{-1} U_A (I - \cos^2 \Theta) U_A^{\top} R_A .
  \end{align*}
  Since $I - \cos^2 \Theta = \sin^2 \Theta$, we have
  \begin{align*}
    M_c = (R_A^{-1} U_A) \sin^2 \Theta (R_A^{-1} U_A)^{-1}.    
  \end{align*}
  This similarity relation proves that the eigenvalues of $M_c$ are
  exactly $\{\sin^2 \theta_i\}_{i=1}^s$. Furthermore, the eigenvectors
  of $M_c$ are the columns of $R_A^{-1} U_A$. The $i$-th eigenvector
  $v_i = R_A^{-1} u^A_i$ of $M_c$ satisfies
  \begin{align*}
    \Psi(X) v_i = A v_i = Q_A R_A (R_A^{-1} u^A_i) = Q_A u^A_i.    
  \end{align*}
  Since $u^A_i$ is the left singular vector of $Q_A^{\top} Q_B$, according
  to Theorem~\ref{thm:Golub1973}, the vector $Q_A u^A_i = A v_i$ is
  precisely the $i$-th principal vector of the subspace
  $\mathcal{R}(A)$. Utilizing Proposition~\ref{prop:pa_L2}, we have
  $u^{\Sc}_i(\cdot) = \Psi(\cdot)v_i \,\, \forall \, i \in [s]$.
\end{proof}

We leverage this spectral characterization to establish the algebraic
equivalence between the RFB-EDMD and SPV pruning algorithms.

\begin{theorem}[Algorithmic Equivalence]\label{thm:RFB-EDMD_SPV}
  The SPV pruning and RFB-EDMD algorithms are algebraically equivalent, removing the identical one-dimensional subspace from the search space $\mathcal{S}$ at every iteration.
\end{theorem}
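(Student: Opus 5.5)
The proof will proceed by induction on the iteration index. At every step we need to show that the one-dimensional subspace discarded by RFB-EDMD coincides with the one discarded by SPV pruning. Then, starting from the same $\Sc_1 = \Sc$, both algorithms produce identical sequences $\Sc_i$ and apply identical stopping tests.

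First we recall the RFB-EDMD iteration from~\cite{MH-JC:25-access}. With data matrices $A_i = \Psi_i(X)$ and $B_i = \Psi_i(X^+)$ for the current dictionary $\Psi_i$, it forms the consistency matrix $M_c^{(i)} = I - K_f^{(i)} K_b^{(i)}$. It terminates once the largest eigenvalue $\lambda_{\max}(M_c^{(i)}) \le \epsilon^2$. Otherwise it computes an eigenvector $w$ for $\lambda_{\max}$ and removes the direction $\Psi_i(\cdot) w$ from $\text{span}(\Psi_i)$, replacing $\Psi_i$ by a basis of the complement of that direction. I would state this description explicitly, taking the complement in the $L_2(\mu_X)$ inner product, so that both algorithms are compared on a common footing.

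Next, fix an iteration $i$ and assume that $\Sc_i$ is the same for both algorithms. Lemma~\ref{lem:spectral-consistency}, applied to $\Psi_i$ with $A_i, B_i$ of full column rank, gives three facts:
\begin{enumerate}
\item the eigenvalues of $M_c^{(i)}$ are $\sin^2\theta_j(\Sc_i,\Kc\Sc_i)$;
\item since $\sin^2$ is monotone on $[0,\pi/2]$, $\lambda_{\max} = \sin^2\theta_{\max}$, so the test $\lambda_{\max}\le\epsilon^2$ is equivalent to the SPV test $\sin\theta_{\max}\le\epsilon$;
\item the eigenvector $w$ satisfies $\Psi_i(\cdot)w = u^{\Sc_i}_{\max}$, which is exactly the vector SPV removes.
\end{enumerate}
Consequently both algorithms compute the same $\Sc_{i+1} = \Sc_i \ominus \text{span}(u_{\max}^{\Sc_i})$, and the induction closes. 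The failure case $\Sc_i=\emptyset$ is handled identically by both.

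The main obstacle is non-uniqueness when $\theta_{\max}$ is repeated. In that case the eigenspace of $\lambda_{\max}$, and likewise the set of top principal vectors, has dimension greater than one. The claim then has to be read as follows: any admissible choice in one algorithm is an admissible choice in the other. I would handle this by showing that the two sets coincide, namely that $\{\Psi_i(\cdot)w : M_c^{(i)} w = \lambda_{\max} w\}$ equals the span of the principal vectors of $\Sc_i$ associated with $\theta_{\max}$. This follows from the similarity $M_c = (R_A^{-1}U_A)\sin^2\Theta(R_A^{-1}U_A)^{-1}$ established in the proof of Lemma~\ref{lem:spectral-consistency}, which maps eigenspaces to blocks of equal singular values.

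A secondary point is that full column rank of $A_i$ and $B_i$ must persist after pruning, so that the lemma can be reapplied at each step. For $A_i$ this holds because $\Psi_i$ is a basis of a subspace of $\Sc$ and $\Ec|_{\Sc}$ is an isomorphism. For $B_i$ it holds because $\Kc$ restricted to $\Sc$ is injective under the standing assumption $\dim(\Kc\Sc)=s$.
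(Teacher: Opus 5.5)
Your proposal is correct and follows the same route as the paper: apply Lemma~\ref{lem:spectral-consistency} to the current dictionary to get $\lambda_{\max}=\sin^2\theta_{\max}$ and $\Psi(\cdot)v_{\max}\in\text{span}(u^{\Sc}_{\max})$, then use monotonicity of $\sin^2$ on $[0,\pi/2]$ to conclude that both algorithms discard the same direction and so generate the same nested sequence of subspaces. The paper leaves several of your points implicit without changing the argument: the explicit induction, the matching of the stopping tests ($\lambda_{\max}\le\epsilon^2 \Leftrightarrow \sin\theta_{\max}\le\epsilon$), the treatment of a repeated $\theta_{\max}$ via the similarity $M_c=(R_A^{-1}U_A)\sin^2\Theta\,(R_A^{-1}U_A)^{-1}$, and the persistence of full column rank (most simply argued as $B_i=BT_i$ with $T_i$ of full column rank).
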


\begin{proof}
  At iteration $k$, RFB-EDMD updates the search space via $\mathcal{S}_{k+1} = \mathcal{S}_k \setminus \text{span}\{\Psi(\cdot)v_{\max}\}$, where $v_{\max}$ is the eigenvector of the consistency matrix $M_c$ corresponding to its largest eigenvalue $\lambda_{\max}$. Similarly, SPV updates via $\mathcal{S}_{k+1} = \mathcal{S}_k \setminus \text{span}\{u_{\max}^{\Sc}\}$, where $u_{\max}^{\Sc} \in \mathcal{S}_k$ is the principal vector corresponding to the largest principal angle $\theta_{\max}$ between $\mathcal{S}_k$ and its image $\Kc \Sc_k$.

  Lemma~\ref{lem:spectral-consistency} establishes two key equivalences: $\lambda_{\max} = \sin^2 \theta_{\max}$ and $\Psi(\cdot)v_{\max} \equiv u_{\max}^{\Sc}$. Because $\sin^2 \theta$ is monotonically increasing on $[0, \pi/2]$, maximizing the eigenvalue directly maximizes the principal angle. Consequently, both algorithms discard the exact same functional direction at every step, producing an identical sequence of nested subspaces $\mathcal{S}_0 \supset \mathcal{S}_1 \supset \dots$ and the same final output.
\end{proof}

To conclude, the SPV and RFB-EDMD algorithms are indeed equivalent when working with the $L_2(\mu_X)$ inner product. However, the SPV algorithm is more general and can be applied to arbitrary inner products, as it relies solely on the geometric notion of principal angles and vectors. This generality of SPV allows us to extend the pruning framework to a wider class of function spaces and inner products, which we will explore in future work.

\section{Efficient Computation of Principal Angles and Vectors via
  Rank-One Updates}\label{sec:efficient_computation}


The naive SPV pruning algorithm is computationally expensive, especially for large dictionaries, because it requires a full Singular Value Decomposition (SVD) at each iteration to recompute principal angles and vectors after a direction is dropped. To address this bottleneck, this section develops an efficient update procedure based on symmetric rank-one corrections to an eigenproblem. While our theoretical results apply broadly to any general inner product space $\Fc$, our algorithmic implementation and time complexity analyses focus specifically on the $L_2(\mu_X)$ setting, which is the most common in data-driven Koopman analysis.

\subsection{Computation of Principal
  Arguments}\label{sec:computation_principal_angles}

Consider the subspaces $\Sc, \Kc \Sc \subset \Fc$ according to \eqref{eqn:setting_principal_arguments}. Let
\begin{align*}
  \Uc & = [u^{\Sc}_1 \, u^{\Sc}_2 \, \dots \, u^{\Sc}_s], \quad
        \Kc \Uc = [\Kc u^{\Sc}_1 \, \Kc u^{\Sc}_2 \, \dots \, \Kc
        u^{\Sc}_s],
  \\
  \Lambda_{\cos}
      & \!=\! \text{diag}(\cos \theta_1, \dots, \cos \theta_s)
        , \,
  \Lambda_{\sin} \!=\! \text{diag}(\sin \theta_1, \dots, \sin
  \theta_s). 
\end{align*}
Define the pruned subspace
$\Sc^{\text{new}} = \text{span}(u^{\Sc}_1, u^{\Sc}_2, \dots,
u^{\Sc}_{s-1})$ obtained by dropping the top $k$ principal vectors.
The updated image is
$\Kc \Sc^{\text{new}} = \text{span}(\Kc u^{\Sc}_1, \Kc u^{\Sc}_2,
\dots, \Kc u^{\Sc}_{s-1})$.

We introduce a vector $\omega \subset \Kc \Sc$ that spans the orthogonal
complement of the new image subspace $\Kc \Sc^{\text{new}}$ within
$\Kc \Sc$. This helps us decompose the projection onto the
new image subspace $\Kc \Sc^{\text{new}}$ into a projection onto the
old image subspace $\Kc \Sc$ followed by a rank-one
update, which can be computed efficiently.
%
%
In order to improve numerical stability, we do not compute this vector via explicit projections, which can be numerically unstable. Instead, this is extracted from the thin QR decomposition
\cite[Section 5.2]{GHG-CFVL:13}
%
%
of the image matrix $ \Kc \Uc$,
\begin{equation}
  W R = [\Kc u^{\Sc}_1, \Kc u^{\Sc}_2, \dots, \Kc u^{\Sc}_s],
  \label{eq:QR_W}
\end{equation}
where $R \in \real^{s \times s}$ is an upper triangular matrix and
$W = [w_1 \, w_2 \, \dots \, w_s]$ is a matrix with orthonormal
columns.
%
%
Because the first $s-1$ columns of $\Kc \Uc$ span
$\Kc \Sc^{\text{new}}$, the standard QR algorithm guarantees that the
first $s-1$ columns of $W$ form an orthonormal basis for
$\Kc \Sc^{\text{new}}$. The remaining column spans the orthogonal
complement of $\Kc \Sc^{\text{new}}$ within~$\Kc \Sc$. Thus, we define
$\omega = w_s$.

Let
$\omega = \sum_{i=1}^{s} d_i^{\omega} \Kc v^{\Kc \Sc}_i$ be a basis expansion, and
define
$d^{\omega} = [d^{\omega}_1 \, \dots \,
d^{\omega}_s]^{\top}$. By construction, the coordinate vectors are
orthonormal and satisfy
$(d^{\omega_l})^{\top} d^{\omega_l} = \delta_{ll}$.  Define the rank-one update matrices
$N_0, N_1 \in \mathbb{R}^{s \times s}$ by
\begin{align}
  N_0 \!=\! \Lambda_{\sin}^2, \,
  N_1 \!=\! N_{0} + \Lambda_{\cos} d^{\omega} (\Lambda_{\cos}
  d^{\omega})^{\top}. 
    \label{eq:rank_one_matrices}
\end{align}

The following result describes to how to compute the principal angles
and vectors between $\Sc^{\text{new}}$ and~$\Kc \Sc^{\text{new}}$
efficiently using the eigenpairs of~$N_1$.

\begin{theorem}\longthmtitle{Efficient Computation of Principal
  Arguments}\label{thm:efficient_computation}
  Let $\tilde{N}_1 \in \mathbb{R}^{(s-1) \times (s-1)}$ be the
  truncated matrix obtained by dropping the last row and column
  of $N_1$. Let
  $(\lambda_{\alpha} \in \mathbb{R}, z_{\alpha} \in
  \mathbb{R}^{s-1})_{\alpha = 1}^{s-1}$ be the eigenpairs of
  $\tilde{N}_1$, arranged with increasing eigenvalues.  Then, the
  principal vectors
  $\{u^{\Sc^{\text{new}}}_{\alpha}\}_{\alpha=1}^{s-1} \subset
  \Sc^{\text{new}}$ and squared principal sines
  $\sin^2 \theta_{\alpha}(\Sc^{\text{new}}, \Kc \Sc^{\text{new}})$ are
  given by
  \begin{equation}
    \begin{aligned}
      \sin^2 \theta_{\alpha} (\Sc^{\text{new}}, \Kc \Sc^{\text{new}})
      =
        \lambda_{\alpha},
      \,\,
      u^{\Sc^{\text{new}}}_{\alpha} \!= \Uc_{s-1} z_{\alpha}, \label{eq:new_vectors_angles} 
    \end{aligned}
  \end{equation}
  for $\alpha \in [s-1]$, where
  $\Uc_{s-1} = [u^{\Sc}_1 \, u^{\Sc}_2 \, \dots \,
  u^{\Sc}_{s-1}]$.
\end{theorem}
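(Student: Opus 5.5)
The plan is to reduce the claim to Theorem~\ref{thm:Golub1973} by computing, in the orthonormal coordinates $\Uc_{s-1}$ of $\Sc^{\text{new}}$, the Gram matrix of the projection onto the new image $\Kc\Sc^{\text{new}}$. I will use three facts.
\begin{enumerate}
\item The principal vectors $\{u^{\Sc}_i\}$ and $\{\Kc v^{\Kc\Sc}_i\}$ are orthonormal bases of $\Sc$ and $\Kc\Sc$. This uses the standing assumption that both subspaces have dimension $s$.
\item The cross inner products are $\langle u^{\Sc}_i, \Kc v^{\Kc\Sc}_j\rangle = \delta_{ij}\cos\theta_i$.
\item By the QR construction \eqref{eq:QR_W}, $\Kc\Sc = \Kc\Sc^{\text{new}} \oplus \text{span}(\omega)$ with $\|\omega\|=1$. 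Since $\dim \Kc\Sc^{\text{new}} = s-1$ (the $\Kc u^{\Sc}_i$ are independent because $\dim\Kc\Sc=s$), this gives the operator identity $\Pc_{\Kc\Sc^{\text{new}}} = \Pc_{\Kc\Sc} - \omega\langle\omega,\cdot\rangle$.
\end{enumerate}

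\textbf{Step 1: the quadratic form.} Take a generic $x = \Uc_{s-1} z \in \Sc^{\text{new}}$ with $z\in\real^{s-1}$. By fact (b), $\Pc_{\Kc\Sc} u^{\Sc}_i = \cos\theta_i\, \Kc v^{\Kc\Sc}_i$. Hence $\|\Pc_{\Kc\Sc}x\|^2 = z^\top \Lambda_{\cos}^2 z$, where $\Lambda_{\cos}$ is restricted to the leading $(s-1)\times(s-1)$ block. Expanding $\omega = \sum_i d^{\omega}_i \Kc v^{\Kc\Sc}_i$ gives $\langle \omega, u^{\Sc}_i\rangle = d^{\omega}_i\cos\theta_i$, so $\langle\omega,x\rangle = z^\top(\Lambda_{\cos}d^{\omega})_{1:s-1}$. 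Combining these with fact (c) and $\|x\|^2 = \|z\|^2$ yields
\[
\|x\|^2 - \|\Pc_{\Kc\Sc^{\text{new}}}x\|^2 = z^\top\big(\Lambda_{\sin}^2 + \Lambda_{\cos}d^{\omega}(\Lambda_{\cos}d^{\omega})^\top\big)_{1:s-1,1:s-1} z = z^\top \tilde N_1 z .
\]
This is exactly why the last row and column of $N_1$ are discarded.

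\textbf{Step 2: link to Theorem~\ref{thm:Golub1973}.} Let $Q_{\Uc}=\Uc_{s-1}$ and let $Q_{\Vc}$ be the first $s-1$ columns of $W$, an orthonormal basis of $\Kc\Sc^{\text{new}}$. Then $Q_{\Uc}^\top Q_{\Vc}Q_{\Vc}^\top Q_{\Uc}$ is the matrix of the form $z\mapsto\|\Pc_{\Kc\Sc^{\text{new}}}\Uc_{s-1}z\|^2$, which by Step 1 equals $I-\tilde N_1$. Writing the SVD $Q_{\Uc}^\top Q_{\Vc} = \widetilde U\Sigma\widetilde V^\top$ gives $I-\tilde N_1 = \widetilde U\Sigma^2\widetilde U^\top$. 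Therefore the eigenvectors of the symmetric matrix $\tilde N_1$ are the left singular vectors $\widetilde u_\alpha$, and its eigenvalues are $1-\sigma_\alpha^2 = \sin^2\theta_\alpha(\Sc^{\text{new}},\Kc\Sc^{\text{new}})$. Theorem~\ref{thm:Golub1973} then gives $u^{\Sc^{\text{new}}}_\alpha = \Uc_{s-1}z_\alpha$. Listing eigenvalues in increasing order matches singular values in decreasing order, so the indexing agrees with increasing angles. For repeated eigenvalues, the principal vectors are determined only up to a choice of basis of each eigenspace, and any orthonormal eigenbasis is admissible.

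\textbf{Expected obstacle.} The main care point is the projection identity in fact (c) and the claim that $\omega$ is a unit vector in $\Kc\Sc$ orthogonal to $\Kc\Sc^{\text{new}}$. Both rely on the QR structure and on the first $s-1$ images being linearly independent. A second point to check is that the coefficient vector $d^{\omega}$ is taken in the orthonormal basis $\{\Kc v^{\Kc\Sc}_i\}$. That choice is what makes $\langle\omega,u^{\Sc}_i\rangle = d^{\omega}_i\cos\theta_i$ hold with no Gram-matrix correction. Everything else is routine bookkeeping.
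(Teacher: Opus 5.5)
Your argument is correct. Step 1 is the same computation as the paper's. The paper pads the coefficients as $\tilde c=[c_1,\dots,c_{s-1},0]^\top$ and truncates $N_1$ afterwards, while you work in $\real^{s-1}$ from the start, which is equivalent.

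Step 2 uses a different key lemma. The paper reads $\tilde c^\top(I-\tilde N_1)\tilde c/\tilde c^\top\tilde c$ as $\|\Pc_{\Kc\Sc^{\text{new}}}\tilde u\|^2/\|\tilde u\|^2$. It then identifies the successive constrained minima with principal cosines via the variational characterization of Lemma~\ref{lemma:cos_theta_max} plus Courant--Fischer. You instead polarize to the matrix identity $I-\tilde N_1=MM^\top$, where $M$ is the cross-Gram matrix of orthonormal bases of $\Sc^{\text{new}}$ and $\Kc\Sc^{\text{new}}$. You then read off the eigenpairs from the SVD characterization of Theorem~\ref{thm:Golub1973}.

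Your route has two advantages:
\begin{itemize}
\item It needs only a result already stated in the main text, rather than the appendix lemma, which the paper states without proof.
\item It makes the ordering and the freedom for repeated eigenvalues explicit. For the latter, the precise justification is that any orthonormal eigenbasis $\{z_\alpha\}$ of $MM^\top$ extends to an SVD of $M$. Set $\tilde v_\alpha=M^\top z_\alpha/\sigma_\alpha$ when $\sigma_\alpha>0$ and complete with an orthonormal basis of $\ker M$.
\end{itemize}

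The paper's variational route stays coordinate-free in $\Fc$, in keeping with Remark~\ref{rem:general_spaces}. Theorem~\ref{thm:Golub1973} is stated only for $\real^n$, so you should add one sentence to your Step 2. It should interpret $Q_{\Uc}^\top Q_{\Vc}$ as the matrix $[\langle u^{\Sc}_i,w_j\rangle_{\Fc}]$. It should also note that principal angles and vectors are preserved under an isometric identification of the finite-dimensional space $\Sc^{\text{new}}+\Kc\Sc^{\text{new}}$ with a Euclidean space.
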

\begin{proof}
  Based on the construction of $\omega$ we have the
  orthogonal direct sum
  $\Kc \Sc = \Kc \Sc^{\text{new}} \,\oplus\, \text{span}(\omega
  )$. Thus, the orthogonal projection operator can be
  expressed as
  \[
    \mathcal{P}_{\Kc \Sc^{\text{new}}} = \mathcal{P}_{\Kc \Sc} -
    \mathcal{P}_{\omega}.
  \]
  Let $\tilde{u} \in \Sc^{\text{new}}$ be decomposed as
  $\tilde{u} = \sum_{\alpha=1}^{s-1} c_\alpha u^{\Sc}_{\alpha}$.
  %
  %
  The projection of $\tilde{u}$ onto the original image
  space $\Kc \Sc$ yields
  $
    \mathcal{P}_{\Kc \Sc} \tilde{u} = \sum_{\alpha=1}^{s-1} c_\alpha \Kc v^{\Kc
      \Sc}_{\alpha} \cos \theta_{\alpha}.
  $
  To compute the projection onto $\omega$, we let
  $t = \langle \omega, \tilde{u} \rangle_{\Fc}$. Then,
  \begin{align*}
    t &= \Big\langle \sum_{i=1}^{s} d^{\omega_l}_i \Kc v^{\Kc \Sc}_i,
    \sum_{\alpha=1}^{s-1} c_\alpha u^{\Sc}_{\alpha} \Big\rangle_{\Fc} \\
    &= \sum_{\alpha=1}^{s-1} d^{\omega_l}_\alpha c_\alpha \cos
      \theta_{\alpha} = \tilde{c}^{\top} 
    \Lambda_{\cos} d^{\omega_l}, 
  \end{align*}
  where $\tilde{c} = [c_1, \dots, c_{s-1}, 0]^{\top} \in \mathbb{R}^s$.
  The projection is therefore
  $\mathcal{P}_{\Kc \Sc^{\text{new}}} \tilde{u} = \mathcal{P}_{\Kc
    \Sc} \tilde{u} - t \omega$.
  %
  Since the basis elements are orthonormal, the squared norm
  evaluates as
  \begin{align*}
    \| \mathcal{P}_{\Kc \Sc^{\text{new}}} \tilde{u} \|_{\Fc}^2 
    &=   
      \big\| \Lambda_{\cos} \tilde{c} - t d^{\omega} \big\|_2^2
    \\
    &= \tilde{c}^{\top} \Lambda_{\cos}^2 \tilde{c} + t^2 - 2
      t \tilde{c}^{\top} \Lambda_{\cos} d^{\omega}
    \\
    &= \tilde{c}^{\top} \Big( \Lambda_{\cos}^2 - \Lambda_{\cos}
      d^{\omega} (\Lambda_{\cos} d^{\omega})^{\top}  \Big) \tilde{c}.  
  \end{align*}
  Using equation \eqref{eq:rank_one_matrices}, this expression can be
  rewritten as
  $\| \mathcal{P}_{\Kc \Sc^{\text{new}}} \tilde{u} \|_{\Fc}^2 =
  \tilde{c}^{\top} (I - \tilde{N}_1) \tilde{c}$. Since
  $\| \tilde{u} \|_{\Fc}^2 = \tilde{c}^{\top} \tilde{c}$, utilizing
  Lemma~\ref{lemma:cos_theta_max}, we find the principal angles by
  solving the Rayleigh quotient minimization:
  \[
    \min_{\tilde{c} \in \mathbb{R}^{s-k}}
    \frac{\tilde{c}^{\top} (I - \tilde{N}_1) \tilde{c}}{\tilde{c}^{\top}
      \tilde{c}}.
  \]
  %
  %

  Note that Lemma~\ref{lemma:cos_theta_max} dictates that subsequent
  minimizers must be orthogonal in $\Fc$ (i.e.,
  $\langle \tilde{u}_a, \tilde{u}_b \rangle_{\Fc} =
  0$). Orthonormality of the basis $\Uc^{\text{new}}$ guarantees this
  is equivalent to the Euclidean constraint
  $\tilde{c}_a^\top \tilde{c}_b = 0$. By the Courant-Fischer min-max
  theorem, sequentially minimizing a Rayleigh quotient subject to
  Euclidean orthogonality constraints is exactly equivalent to
  computing the eigendecomposition of the symmetric matrix
  $(I-\tilde{N}_{1})$.
  
  The successive minimums of this quotient yield the eigenvalues
  $\cos^2 \theta_\alpha$. Consequently, the eigenvalues of
  $\tilde{N}_1$ are exactly
  $1 - \cos^2 \theta_\alpha = \sin^2 \theta_\alpha$, verifying the angles in
  equation \eqref{eq:new_vectors_angles}. Furthermore, the corresponding
  eigenvectors $z_\alpha$ directly provide the parameterization for
  the updated principal vectors
  $u_{\alpha}^{\Sc^{\text{new}}} = \Uc^{\text{new}}z_\alpha$, yielding
  the vectors in equation \eqref{eq:new_vectors_angles}.
\end{proof}

\begin{remark}\longthmtitle{Applicability to General Inner Product
  Spaces}\label{rem:general_spaces}
  It is important to emphasize that this rank-one update procedure is
  entirely coordinate free and holds for general abstract inner
  product spaces, 
  not just Euclidean spaces. Because the formulation relies purely on
  the principal angles and the coefficients of the QR decomposition,
  the algorithm never requires evaluating the abstract inner product
  $\langle \cdot, \cdot \rangle_{\Fc}$ explicitly during the update
  steps. \oprocend
\end{remark}

\begin{remark}[Computational Complexity and LAPACK]\label{rem:lapack}
  The core of our efficient update procedure relies on the symmetric rank-one modification problem: computing the eigendecomposition of $A \pm \rho u u^{\top}$ given the known eigendecomposition of a symmetric matrix $A \in \mathbb{R}^{n \times n}$. This update requires only $O(n^2)$ operations and is robustly implemented in the LAPACK subroutine \texttt{DLAED9}~\cite{EA-LAPACK:99}. In the context of Theorem~\ref{thm:efficient_computation}, the transition from the original matrix $N_0$ to the updated matrix $N_1$ takes exactly this rank-one form. Consequently, computing the new principal angles and vectors after dropping a single direction costs $O(s^2)$ operations. This provides a significant computational advantage over the $O(s^3)$ cost required to naively recompute the full decomposition at each step. \oprocend
\end{remark}
\subsection{Incremental Basis Update via QR Decomposition}

While rank-one updates efficiently yield the new principal angles,
iterative pruning also requires maintaining an orthogonal basis for
the image space $\Kc \Sc$. If we were to use the empirical $L_2(\mu)$
inner product over $N$ data points, a naive re-computation of the thin
%
%
QR decomposition for the image matrix $\Kc \Uc$ as in~\eqref{eq:QR_W}
would incur a prohibitive computational cost of $O(Ns^2)$ at each
step.

To circumvent this, we update the factors incrementally. Suppose we
have the decomposition $\Kc \Uc = W R$ available. By restricting
operations to the smaller triangular matrix $R$, we avoid processing
the full, high-dimensional matrix $\Kc \Uc$ directly. We propose the
following efficient procedure to compute the QR decomposition of the
updated image $\Kc \Uc^{\text{new}}$.  Recall from
Theorem~\ref{thm:efficient_computation} that the new principal vectors
$\Uc^{\text{new}}$ are formed by taking linear combinations of the
retained basis using the computed eigenvectors, i.e.,
\begin{equation}
  \Uc^{\text{new}} = [u^{\Sc^{\text{new}}}_1 \, \dots \,
  u^{\Sc^{\text{new}}}_{s-1}] = \underbrace{[u^{\Sc}_1 \, \dots \,
    u^{\Sc}_{s-1}]}_{\text{Retained Basis}} \underbrace{[z_1 \, \dots
    \, z_{s-1}]}_{E}. 
\end{equation}

\textbf{1. Construct the Re-alignment Matrix}\\
Let $E \in \mathbb{R}^{(s-1) \times (s-1)}$ be the matrix of
eigenvectors. We construct a transformation matrix
$T \in \mathbb{R}^{s \times (s-1)}$ by padding $E$ with zeros to align
with the original $s$-dimensional space:
\begin{equation}
  T = \begin{bmatrix} E \\ 0_{1 \times (s-1)} \end{bmatrix}.
\end{equation}

\textbf{2. Update the Triangular Factor}\\
We apply the transformation $T$ to the existing upper triangular
factor $R$ to form the intermediate matrix
$C = R T \in \mathbb{R}^{s \times (s-1)}$. We then perform a QR
decomposition of $C$ as
\begin{equation}
  C = Q_C R_C ,
\end{equation}
where $Q_C \in \mathbb{R}^{s \times (s-1)}$ is orthogonal and
$R^{\text{new}} = R_C \in \mathbb{R}^{(s-1) \times (s-1)}$ is the new
upper triangular factor.

\textbf{3. Update the Orthogonal Bases}\\
Finally, we update the orthogonal image basis $W$ by applying the
rotations derived above,
\begin{align}
    W^{\text{new}} &= W Q_C.
\end{align}
As we show next, the resulting matrices form the QR decomposition of
the new image space $\Kc \Uc^{\text{new}}$

\begin{lemma}[Correctness of Incremental
  QR]\label{lemma:incremental_validity}
  %
  %
  Consider the notation and construction of
  Section~\ref{sec:computation_principal_angles}.  The matrices
  $W^{\text{new}}$ and $R^{\text{new}}$ are a valid QR decomposition
  of~$\Kc \Uc^{\text{new}}$, i.e.,
  $\Kc \Uc^{\text{new}} = W^{\text{new}} R_C$.
\end{lemma}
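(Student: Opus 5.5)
The plan is a direct algebraic verification that uses only linearity of $\Kc$ and the existing factorization $\Kc\Uc = WR$ from \eqref{eq:QR_W}.

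\emph{Step 1: express the new image matrix through $T$.} Since $\Kc$ is linear and acts columnwise,
\[
\Kc\Uc^{\text{new}} = \Kc\big([u^{\Sc}_1\,\dots\,u^{\Sc}_{s-1}]\,E\big) = [\Kc u^{\Sc}_1\,\dots\,\Kc u^{\Sc}_{s-1}]\,E .
\]
Padding $E$ with a zero row gives $T$, and this leaves the product unchanged when we use the full matrix: $[\Kc u^{\Sc}_1\,\dots\,\Kc u^{\Sc}_{s-1}]\,E = \Kc\Uc\, T$. The last column $\Kc u^{\Sc}_s$ is multiplied by the zero row, so it contributes nothing.

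\emph{Step 2: substitute the factorizations.} Inserting $\Kc\Uc = WR$ and then $C = RT = Q_C R_C$ gives
\[
\Kc\Uc^{\text{new}} = W R T = W C = (W Q_C)\, R_C = W^{\text{new}} R_C .
\]
This is the claimed identity. Here $W$ should be read as a quasimatrix of functions in $\Fc$ with orthonormal columns, and all manipulations are associativity of finite linear combinations.

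\emph{Step 3: check that the pair is a QR decomposition.} The factor $R_C$ is upper triangular by construction. It remains to show $W^{\text{new}}$ has orthonormal columns:
\[
(W^{\text{new}})^{\top}W^{\text{new}} = Q_C^{\top}(W^{\top}W)Q_C = Q_C^{\top}Q_C = I_{s-1}.
\]
The entries of $W^{\top}W$ are understood as $\langle w_i,w_j\rangle_{\Fc}$, which equal $\delta_{ij}$ because $W$ has orthonormal columns. The middle step uses this together with the fact that the thin QR factor $Q_C$ has orthonormal columns.

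\emph{Main obstacle: full rank and uniqueness.} For $R_C$ to be invertible, $C$ must have full column rank. The matrix $R$ is invertible because $\dim\Kc\Sc = s$ under the Standing Assumption, and $E$ is orthogonal because its columns are eigenvectors of the symmetric matrix $\tilde N_1$, chosen orthonormal. Hence $T$, and therefore $C$, has full column rank $s-1$, so $R_C$ is invertible. If a unique QR (positive diagonal) is desired, we can normalize the signs of $R_C$ and adjust the columns of $Q_C$ accordingly. This normalization does not affect validity, since orthonormality and triangularity are preserved.
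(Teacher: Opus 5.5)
Your proposal is correct and matches the paper's proof: use linearity to write $\Kc\Uc^{\text{new}} = \Kc\Uc\,T$, then substitute $\Kc\Uc = WR$ and $C = RT = Q_C R_C$, and note that $WQ_C$ has orthonormal columns while $R_C$ is upper triangular. Your full-rank argument ($R$ invertible since $\dim \Kc\Sc = s$, and $E$ orthogonal) is a correct addition the paper omits, but the stated identity does not need it.
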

\begin{proof}
  By equation \eqref{eq:new_vectors_angles}, we have
  $\Uc^{\text{new}} = \Uc T$.
  %
  %
  Linearity of the operator $\Kc$ implies
  $\Kc \Uc^{\text{new}} = \Kc \Uc T$. Substituting the initial QR
  decomposition $\Kc \Uc = W R$
  %
  %
  yields
  \[
    \Kc \Uc^{\text{new}} = W R T.
  \]
  Using the definition of $C$ and its decomposition $C = Q_C R_C$, we
  expand the expression as
  \[
    \Kc \Uc^{\text{new}} = W (Q_C R_C) = (W Q_C) R_C = W^{\text{new}} R^{\text{new}}.
  \]
  Since $W$ has orthonormal columns and $Q_C$ is orthogonal, their
  product $W^{\text{new}}$ also has orthonormal columns. Furthermore,
  $R^{\text{new}} = R_C$ is upper triangular by construction. Thus,
  $W^{\text{new}} R^{\text{new}}$ is a valid QR decomposition.
\end{proof}

Based on Lemma~\ref{lemma:incremental_validity}, instead of computing
the QR decomposition of $\Kc \Uc^{\text{new}}$ from scratch (which
costs $O(N(s-1)^2)$ for $N$ data points), we can compute it using the
existing QR decomposition of $\Kc \Uc$ and the QR decomposition of the
smaller matrix $C$, which costs only $O(s(s-1)^2)$. This incremental
update leads to massive computational savings in data-driven
applications where the number of data points vastly exceeds the
dictionary size ($N \gg s$).

%
%
\begin{algorithm}[h]
  \caption{\textbf{Efficient Computation of Principal Arguments}}
  \label{alg:efficient_computation}
  \begin{algorithmic}[1]
    \REQUIRE 
    $\Uc = [u_1, \dots, u_s]$ \RightComment{\textit{PV}s of $\Sc$}
    \REQUIRE 
    $(W ,R) $ \RightComment{Thin QR of $\Kc \Uc$} 
    \REQUIRE 
    $\Lambda_{\sin \theta} \in \real^{s \times s}$
    \RightComment{Principal sines of $\Sc, \Kc \Sc$}  
    \smallskip
    \STATE $\omega \gets w_s$
    \RightComment{Extract last  column of $W$} 
    \smallskip
    \STATE $\Lambda_{\cos} d^{\omega} \gets \langle \Uc, \omega \rangle_{\Fc}$ \RightComment{Rank-one update vector}
    \smallskip
    \STATE $\Lambda_0 \gets (\Lambda_{\sin \theta}^2)_{1:s-1, \, 1:s-1}$ \RightComment{Top-left block}
    \smallskip
    \STATE $b_1 \gets \text{first } s-1 \text{ elements of } \Lambda_{\cos} d^{\omega}$ 
    \smallskip
    \STATE $(\Lambda_1, E_1) \gets \texttt{DLAED9}(\Lambda_{0}, b_1)$ \RightComment{LAPACK routine} 
    \smallskip    
    \STATE Set $T \!\gets\!
    \begin{bmatrix}
      E_1 \\ 0
    \end{bmatrix},\, C \!\gets\! R T$, $\Lambda_{\sin \theta}^{\text{new}} \gets \Lambda_1^{1/2}, \, \Uc^{\text{new}} \gets \Uc T$
    \smallskip
    \STATE $(Q_C, R_C) \gets \text{QR}(C)$, $W^{\text{new}} \gets W Q_C, \quad R^{\text{new}} \gets R_C$
    \smallskip
    \RETURN $(\Lambda_{\sin \theta}^{\text{new}}, \Uc^{\text{new}}, W^{\text{new}}, R^{\text{new}})$
  \end{algorithmic}
\end{algorithm}


\begin{remark}\longthmtitle{Efficient Algorithm for recomputing
    Principal Arguments}
  Algorithm~\ref{alg:efficient_computation} exploits the results of
  Theorem~\ref{thm:efficient_computation} and
  Lemma~\ref{lemma:incremental_validity} to compute efficiently new
  principal angles and vectors after dropping the top principal vector.  This procedure can be used as a high-speed
  subroutine in SPV pruning.  The algorithm takes as
  input the principal vectors $\Uc$ of the subspace $\Sc$, the
  principal sines $\Lambda_{\sin \theta}$ between $\Sc$ and its image
  $\Kc \Sc$, and the QR decomposition $(W,R)$ of $\Kc \Uc$. The algorithm returns the
  corresponding quantities for the updated subspace $\Sc^{\text{new}}$
  after dropping the top principal vector.

  In step 2, we compute $\Lambda_{\cos} d^{\omega}$
  directly via the inner product $\langle \Uc, \omega \rangle_{\Fc}$
  (which reduces to the matrix multiplication $\Uc^T \omega$ when
  utilizing the empirical $L_2$ inner product). This avoids the need
  to explicitly compute the principal vectors of~$\Kc \Sc$.  In steps
  3 and 4, we restrict the matrices and vectors to their first $s-1$
  dimensions. This truncation corresponds to finding the eigenpairs of
  $\tilde{N}_1$, which is obtained by dropping the last row and
  column of the full update matrix $N_1$. In step 5, we use the
  LAPACK subroutine \texttt{DLAED9} \cite{EA-LAPACK:99} to compute the
  eigenpairs $(\Lambda_1, E_1)$ of the matrix
  $\Lambda_{0}  + b_1b_1^T$ obtained via the
  symmetric rank-one update.
  Section~\ref{sec:sim-computational_efficiency_numerics} describes
  numerical benchmarks demonstrating the vast computational efficiency
  Algorithm~\ref{alg:efficient_computation} when integrated into the
  pruning procedures.  \oprocend
\end{remark}

\section{Simulation Results}\label{sec:sim-computational_efficiency_numerics}

In this section, we compare the computation times of the proposed rank-one update scheme against the naive approach. Additionally, we present numerical results that demonstrate the improved quality of the leading non-trivial Koopman eigenfunction obtained via the SPV pruning procedure. All simulations are performed in Python 3.11.4 on a machine with an Apple M1 Pro chip and 16 GB of RAM. Unless otherwise specified, we employ the standard inner product on $L_2(\mu_X)$, where $X$ denotes the trajectory data.

To benchmark the numerical performance of the proposed pruning algorithms, we consider the damped Duffing oscillator. Using a time step of $\Delta_t = 0.01$, the discretized system dynamics are given by:
\begin{subequations}\label{eq:sys1}
  \begin{align}
    x_{1}^+ &= x_1 + \Delta_t x_2, \\
    x_2^+ &= x_2 + \Delta_t (-0.5 x_2 + x_1 - x_1^3).
  \end{align}
\end{subequations}
This system possesses two stable equilibria at $x = (\pm 1, 0)$ and an unstable equilibrium at the origin. Its rich nonlinear dynamics make it an ideal testbed for evaluating Koopman operator approximation methods.

To assess the efficiency gains of the rank-one update scheme (Algorithm \ref{alg:efficient_computation}), we evaluate its computation time against the naive approach, which recomputes the principal vectors and angles from scratch at every pruning step. Trajectory data is gathered by simulating the system from 500 random initial conditions---sampled uniformly from the box $[-2, 2]^2$---for 100 time steps each. We compare the computation times across dictionary sizes of $s \in \{28, 103, 403\}$, which are constructed using polynomial and radial basis functions. The timing results are summarized in Table \ref{tab:timing_comparison}.
\begin{table}[ht]
  \centering
  \setlength{\tabcolsep}{6pt} 
  
  \begin{tabular}{rcc}
    \toprule
    \textbf{Initial Dim.} & 
    \textbf{SPV} & 
    \textbf{SPV (Rank-1)} \\
    \midrule
    28  & 1.0162   & 0.1725  \\
    103 & 22.2471  & 1.5325  \\
    403 & 134.6234 & 6.4018 \\
    \bottomrule
  \end{tabular}
  \caption{\normalfont Wall-clock time comparison (in seconds) demonstrating the efficiency of rank-one updates across varying dictionary sizes.}
  \label{tab:timing_comparison}
\end{table}

Next, we utilize the same dataset with $N = 50,000$ to evaluate the quality of the leading non-trivial Koopman eigenfunction approximation obtained by SPV pruning. This is the eigenfunction associated with the eigenvalue closest to $\lambda = 1$, excluding the trivial eigenvalue at $1$. We employ thin-plate spline radial basis functions to construct the initial dictionary, with $k_{\text{centers}} = 500$ chosen via $k$-means clustering. We also include polynomial features up to degree $1$, resulting in a total initial dictionary size of $s = 503$. We then apply SPV pruning to reduce the dictionary size from $s = 503$ to $s^* = 15$. The results are visualized in Figure \ref{fig:eigenfuncs}. The improved eigenfunction provides a clear separation of the basins of attraction of the two stable equilibria.

\begin{figure}[htb!]
  \centering
  \begin{subfigure}[b]{0.49\linewidth}
    \centering
    \includegraphics[width=\linewidth]{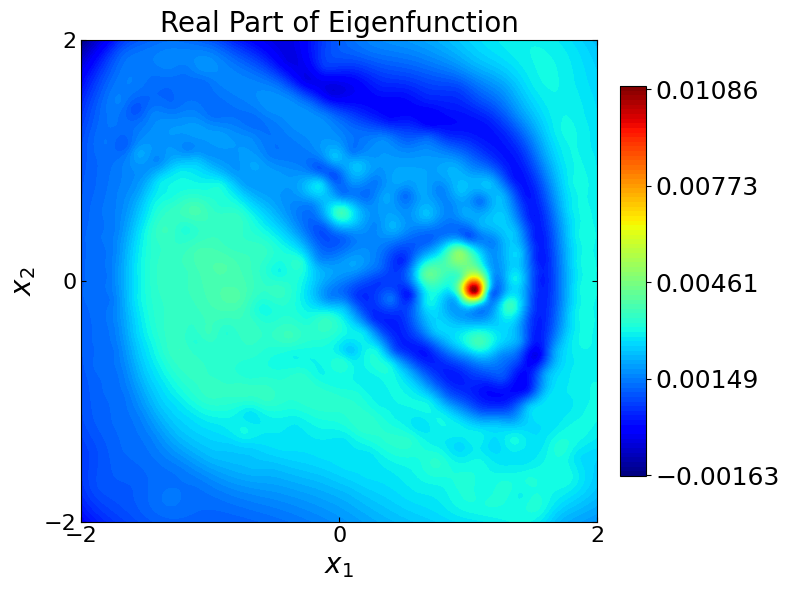}
    \label{fig:unpruned_eigenfunc}
  \end{subfigure}
  \hfill
  \begin{subfigure}[b]{0.49\linewidth}
    \centering
    \includegraphics[width=\linewidth]{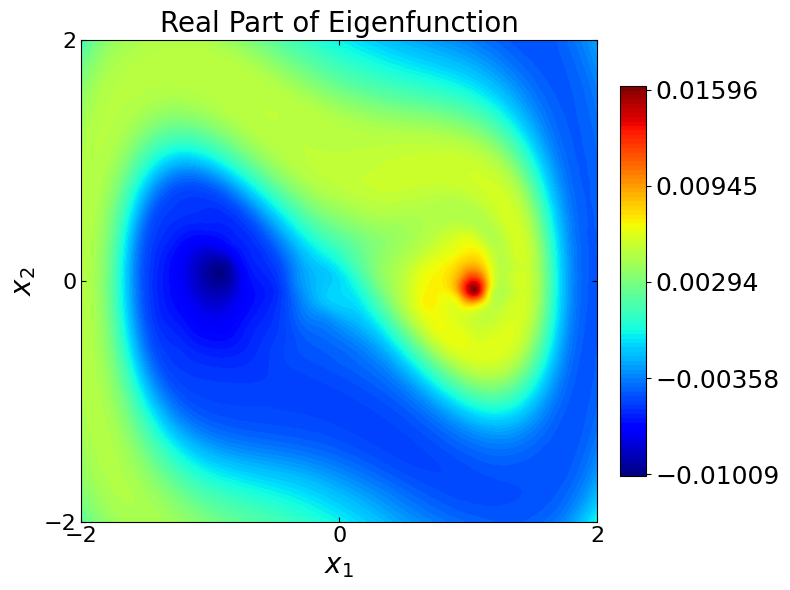}
    \label{fig:pruned_eigenfunc}
  \end{subfigure}
  \caption{Real part of the leading non-trivial Koopman eigenfunction approximations for the system in \eqref{eq:sys1} before (left) and after (right) SPV pruning. The pruned eigenfunction is significantly smoother, demonstrating improved approximation quality achieved by the pruning procedure.}
  \label{fig:eigenfuncs}
\end{figure}

\vspace{-0.15cm}
\section{Conclusions}
In this work, we established a unified framework for subspace pruning to identify approximately Koopman invariant subspaces. Within this framework, we introduced the SPV pruning algorithm, which generalizes existing consistency-based methods. A primary advantage of our approach is its broad applicability to general inner product spaces, extending beyond the standard empirical $L_2$ inner product. Furthermore, to address the computational demands of large-scale, data-driven applications, we developed an efficient rank-one update scheme. This procedure rapidly computes the new principal angles and vectors after a direction is dropped, avoiding the prohibitive cost of full recomputations and ensuring the practical viability of the algorithm.

\section*{Acknowledgments}
The authors would like to thank Masih Haseli for insightful discussions regarding RFB-EDMD and principal angles.

\bibliographystyle{ieeetr}
\bibliography{../bib/alias,../bib/Main-add,../bib/Main,../bib/JC}

\section{Appendix}
We present here a result on principal angles and vectors used for the
proof of Theorem~\ref{thm:efficient_computation}.

\begin{lemma}\longthmtitle{Alternate Characterization of Principal
  Arguments}\label{lemma:cos_theta_max}
  Let $\Uc, \Vc \subset \Hc$ be two subspaces with
  $a = \dim(\Uc) \leq \dim(\Vc) = b$.  Let $\{\theta_j\}_{j=1}^a$ be
  the principal angles between $\Uc$ and $\Vc$, and let
  $\{x_j^{\Uc}\}_{j=1}^a \subset \Uc$ and
  $\{y_j^{\Vc}\}_{j=1}^a \subset \Vc$ be the corresponding principal
  vectors.  Then, for $k \in [a]$,
  \begin{equation}\label{eq:alternate_deftn_pvs} 
    \cos \theta_{a - (k-1)} = \min_{\substack{x \in \Uc \\ x \perp
        x_{a-(k-2)}^{\Uc},\dots,x_a^{\Uc}}} \frac{\| \Pc_{\Vc}(x)
      \|}{\| x \|}  ,
  \end{equation}
  and for a given $k$, the minimizer of \eqref{eq:alternate_deftn_pvs}
  is the principal vector $x_{a-(k-1)}^{\Uc}$. Consequently, the set
  of all minimizers as $k \in [a]$ is exactly the set of principal
  vectors $\{x_j^{\Uc}\}_{j=1}^{a}$.
\end{lemma}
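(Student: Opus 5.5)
The plan is to reduce the statement to the Courant--Fischer min-max characterization of a Rayleigh quotient. The principal vectors $\{x_j^{\Uc}\}_{j=1}^a$ form an orthonormal basis of $\Uc$, and we expand any $x \in \Uc$ as $x = \sum_{j=1}^a c_j x_j^{\Uc}$.

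\textbf{Step 1: the projection in this basis.} The key identity is
\[
\Pc_{\Vc} x_j^{\Uc} = \cos\theta_j \, y_j^{\Vc}.
\]
To see this, extend $\{y_j^{\Vc}\}$ to an orthonormal basis of $\Vc$ by adding vectors $y'_\ell$. The relation $\langle x_i^{\Uc}, y_j^{\Vc}\rangle = \delta_{ij}\cos\theta_i$ (cited earlier from \cite[Proposition~4.4]{MH-JC:24-csl-arxiv-revised}) handles the coefficients along the $y_j^{\Vc}$. The coefficients along the added vectors vanish: if $\langle x_j^{\Uc}, y'_\ell\rangle \neq 0$, then rotating $y_j^{\Vc}$ slightly toward $y'_\ell$ would increase $\langle x_j^{\Uc}, \cdot\rangle$ while keeping the constraints of Definition~\ref{defn:pa_pv}, contradicting maximality. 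Alternatively, in the finite-dimensional setting one can read the identity off the SVD in Theorem~\ref{thm:Golub1973}, since $Q_{\Vc}^\top x_j = \widetilde V \Sigma \widetilde u_j$-type relations follow from $Q_{\Uc}^\top Q_{\Vc} = \widetilde U\Sigma\widetilde V^\top$.

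This identity, and in particular the vanishing of the components of $\Pc_{\Vc} x_j^{\Uc}$ outside $\text{span}\{y_i^{\Vc}\}$, is the step I expect to need the most care. I would justify it via the SVD route, which is cleanest: work in the span of finitely many relevant vectors, which reduces to the Euclidean case.

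\textbf{Step 2: the Rayleigh quotient.} By orthonormality of the $y_j^{\Vc}$,
\[
\frac{\|\Pc_{\Vc} x\|^2}{\|x\|^2} = \frac{\sum_j c_j^2 \cos^2\theta_j}{\sum_j c_j^2},
\]
which is a Rayleigh quotient of the diagonal matrix $\operatorname{diag}(\cos^2\theta_1,\dots,\cos^2\theta_a)$ with entries nonincreasing in $j$.

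\textbf{Step 3: the constrained minimization.} The constraint $x \perp x_{a-(k-2)}^{\Uc},\dots,x_a^{\Uc}$ translates to $c_{a-k+2} = \dots = c_a = 0$. The quotient is then a weighted average of $\cos^2\theta_1,\dots,\cos^2\theta_{a-k+1}$, whose minimum is $\cos^2\theta_{a-k+1}$. This minimum is attained at $c = e_{a-k+1}$, that is, at $x = x_{a-(k-1)}^{\Uc}$, and taking square roots gives \eqref{eq:alternate_deftn_pvs}.

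If the angles are distinct, the minimizer is unique up to scaling. If angles repeat, any unit vector in the corresponding eigenspace is a minimizer and is also a valid choice of principal vector, so the statement holds with the usual non-uniqueness convention for principal vectors. Ranging over $k \in [a]$ then recovers every $x_j^{\Uc}$, which gives the final claim.
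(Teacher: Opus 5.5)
The paper states this lemma in the appendix without giving a proof, so there is no argument of the authors' to compare yours against. Judged on its own, your proof is correct and complete. In coordinates with respect to the orthonormal basis $\{x_j^{\Uc}\}$, the identity $\Pc_{\Vc}x_j^{\Uc}=\cos\theta_j\,y_j^{\Vc}$ turns the quotient into the Rayleigh quotient of $\operatorname{diag}(\cos^2\theta_1,\dots,\cos^2\theta_a)$. The orthogonality constraints then simply zero out the trailing coordinates, which gives both the minimum value and a minimizer. This is the same diagonal Rayleigh-quotient and Courant--Fischer reasoning the paper uses in the proof of Theorem~\ref{thm:efficient_computation}, where the lemma is invoked, so your argument supplies the missing proof in a form consistent with that use. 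Your handling of ties is also right: with repeated angles, the lemma's ``the minimizer'' can only mean ``a minimizer,'' and you show that $x_{a-(k-1)}^{\Uc}$ is one.

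For Step 1, lead with your perturbation argument rather than the SVD route. Take $y(t)=\cos t\,y_j^{\Vc}+\sin t\,y'_\ell$. Since $y'_\ell\perp y_1^{\Vc},\dots,y_a^{\Vc}$, the pair $(x_j^{\Uc},y(t))$ is feasible for the $j$-th problem in Definition~\ref{defn:pa_pv}, and $\max_t|\langle x_j^{\Uc},y(t)\rangle|=\sqrt{\cos^2\theta_j+\langle x_j^{\Uc},y'_\ell\rangle^2}$. This argument works directly for any admissible choice of principal vectors in any Hilbert space, with no reduction to $\real^n$.

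The SVD route, by contrast, assumes that the given $x_j^{\Uc},y_j^{\Vc}$ equal $Q_{\Uc}\widetilde u_j,\,Q_{\Vc}\widetilde v_j$ for a single SVD of $Q_{\Uc}^\top Q_{\Vc}$. When angles repeat, checking this for an arbitrary admissible choice is essentially the identity $Q_{\Vc}^\top x_j^{\Uc}=\sigma_j\widetilde v_j$ you are trying to establish. That route is therefore non-circular only if Theorem~\ref{thm:Golub1973} is read as asserting that every admissible choice arises from some SVD. Also, the relation you want there is $Q_{\Vc}^\top x_j^{\Uc}=\widetilde V\Sigma\widetilde U^\top\widetilde u_j=\sigma_j\widetilde v_j$, not $\widetilde V\Sigma\widetilde u_j$.
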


\end{document}